\documentclass{article}

\usepackage{arxiv}

\usepackage{microtype}
\usepackage{graphicx}
\usepackage{subcaption}
\usepackage{booktabs} 
\usepackage{comment}
\usepackage{hyperref}
\usepackage[numbers]{natbib}

\usepackage{amsmath}
\usepackage{amssymb}
\usepackage{mathtools}
\usepackage{amsthm}

\usepackage[capitalize,noabbrev]{cleveref}

\theoremstyle{plain}
\newtheorem{theorem}{Theorem}[section]
\newtheorem{proposition}[theorem]{Proposition}
\newtheorem{lemma}[theorem]{Lemma}
\newtheorem{corollary}[theorem]{Corollary}
\theoremstyle{definition}
\newtheorem{definition}[theorem]{Definition}
\newtheorem{assumption}[theorem]{Assumption}
\theoremstyle{remark}
\newtheorem{remark}[theorem]{Remark}
\newtheorem{problem}{Problem}
\newtheorem{example}{Example}
\newcommand\norm[1]{\left\lVert#1\right\rVert}

\newcommand{\R}{\mathbb{R}}

\usepackage[textsize=tiny]{todonotes}

\renewcommand{\headeright}{} 
\renewcommand{\undertitle}{Working Paper}
\renewcommand{\shorttitle}{Convergence in Bertrand Competition}

\title{Convergence of Stochastic First-Order Algorithms in Bertrand Competition Under Incomplete Information}

\author{
  Martin Bichler \\
  Chair of Decision Sciences and Systems \\
  Technical University of Munich \\
  Munich, Germany \\
  \texttt{m.bichler@tum.de} \\
  \And
  Jan-Sebastian Höhener \\
  Chair of Decision Sciences and Systems \\
  Technical University of Munich \\
  Munich, Germany \\
  \texttt{jan.hoehener@tum.de} \\
}

\begin{document}
\maketitle

\begin{abstract}
Autonomous pricing agents are widely deployed in online marketplaces, making algorithmic pricing a prominent application of multi-agent learning. Experimental studies often report collusive outcomes, but these findings typically rely on Q-learning in complete-information environments and lack rigorous convergence guarantees. In this paper, we study the stochastic learning dynamics of Regularized Robbins–Monro (RRM) algorithms in a Bayesian Bertrand competition with private costs. We show that this setting violates standard stability conditions, including monotonicity and the Minty variational inequality, rendering classical convergence results for gradient-based learning inapplicable. Despite this, we prove that Euclidean RRM algorithms converge almost surely to the unique, efficient Bayes–Nash equilibrium within a finite-dimensional approximation of the strategy space. By analyzing symmetric piecewise-linear pricing strategies in a duopoly, we explicitly construct a global Lyapunov function for the projected primal dynamics and establish global asymptotic stability of the equilibrium. Our analysis yields rigorous convergence guarantees for stochastic first-order learning algorithms in Bayesian Bertrand competition and provides a principled counterpoint to widespread claims of algorithmic collusion.
\end{abstract}

\keywords{Machine Learning \and Algorithmic Pricing \and Game Theory \and Stochastic Approximation}

\section{Introduction}

Online retail is one of the largest-scale deployments of automated decision-making,
with sellers increasingly relying on learning algorithms to adjust prices
dynamically \citep{brackmann2024art}. From the perspective of an individual seller,
pricing is an online learning problem \citep{shalev-shwartzOnlineLearningOnline2011}:
at each round, the agent selects a price to maximize cumulative profit in an
uncertain and adaptive environment shaped by competitors. Classical economic
theory, by contrast, models price competition using static equilibrium concepts
such as the Bertrand model \citep{bertrand1883book}, which assumes complete
information and simultaneous play. Real-world pricing agents face a 
different strategic environment, usually having no or incomplete information about competitors' costs, which requires them to learn revenue-maximizing pricing strategies over time.

This gap between static equilibrium analysis and adaptive learning raises a
central question for multi-agent learning: what is the long-run behavior of
markets populated by learning agents? Recent experimental evidence suggests a
troubling answer. In a range of pricing environments, learning agents often fail
to converge to the competitive Nash equilibrium and instead sustain prices above
it, a phenomenon commonly referred to as \emph{algorithmic collusion}
\citep{calvano_algorithmic_2019, klein_autonomous_2021, abada_artificial_2023}.
However, much of this evidence relies on tabular Q-learning in complete-information
settings and on specific hyperparameter choices. Q-learning scales poorly, does
not satisfy vanishing regret guarantees, and is therefore a questionable benchmark
for rational pricing behavior in competitive environments.

In this paper, we study learning dynamics based on the class of Euclidean 
\emph{Regularized Robbins--Monro (RRM)} algorithms, which encompass a broad family
of scalable, theoretically grounded no-regret methods, including stochastic gradient ascent (SGA), as well as their optimistic, and extra-gradient versions \citep{Mertikopoulos2023unifiedstochasticapproximationframework}. Unlike tabular Q-learning, RRM algorithms are well suited for continuous action spaces and satisfy minimal rationality
requirements of vanishing regret, which makes them a natural choice for algorithmic pricing agents.

We analyze RRM learning in a \emph{Bayesian Bertrand competition}, where firms'
costs are privately known and drawn from a common distribution. This formulation
captures key features of online pricing environments, while retaining a clear 
equilibrium benchmark: the Bayes--Nash equilibrium (BNE). While competitors' costs 
are drawn from a prior distribution, learning agents do not even know this distribution 
ex-ante. Establishing convergence of learning RRM agents in this setting is challenging, as
standard sufficient conditions used in learning analysis, most notably
monotonicity and the Minty variational inequality, fail to hold, as we show in this paper. 

Our main contribution is to establish that learning converges to the efficient equilibrium within a projected dynamical system framework. We approximate the infinite-dimensional strategy space using symmetric piecewise-linear pricing functions and analyze the projected \emph{primal dynamics} induced by
Euclidean RRM algorithms. For the foundational case of a symmetric duopoly with all-or-nothing demand and two linear pieces, we provide a rigorous proof of convergence by explicitly constructing a global Lyapunov function for the projected dynamics and prove that the competitive BNE is a globally asymptotically stable equilibrium. Using stochastic approximation
theory for projected dynamical systems, we lift this stability result to discrete-time learning and establish almost sure convergence of the class of Euclidean RRM algorithms.

Beyond providing a rigorous convergence guarantee, these results offer a positive
counterpoint to the prevailing narrative of algorithmic collusion. They show that
widely used no-regret learning algorithms converge to competitive outcomes even
under incomplete information and stochastic updates, and they provide a practical
method for computing approximate BNEs in Bayesian games where analytical solutions
are intractable. Our contributions are summarized as follows:
\begin{itemize}
    \item We show that Bayesian Bertrand competition violates standard sufficient
    conditions for convergence, including monotonicity and the Minty variational
    inequality.
    \item We prove global asymptotic stability of the competitive equilibrium
    within the projected space of symmetric piecewise-linear strategies ($m=2$)
    by constructing an explicit global Lyapunov function for the projected primal
    mean dynamics.
    \item We apply stochastic approximation theory to establish almost sure
    convergence of a broad family of Euclidean RRM algorithms, including projected
    gradient, optimistic, and extra-gradient methods.
\end{itemize}

While our explicit derivation focuses on the two-piece case, the underlying
Lyapunov-based approach extends to strategies with more linear pieces, albeit
requiring more sophisticated constructions beyond the scope of this paper.

\section{Bertrand Competition}
Let us first introduce the Bertrand competition as a complete-information game, before we expand this to the Bayesian version of the game. 

\subsection{Complete Information}
Let $n\in \mathbb{N}$ be the number of firms. We consider a strategic game in which each firm $i \in \mathcal{N} := \{1, \dots, n\}$ competes by choosing a price $p_i \in \mathbb{R}_{\geq0}$ to maximize its utility (or profit). Setting fixed costs to zero for simplicity, the utility is given by:
\[
    u_i(p_i, \mathbf{p_{-i}}) = q_i(\mathbf{p}) \, (p_i - c_i),
\]
where $c_i$ denotes the marginal cost and $q_i(\mathbf{p})$ is the firm's sales quantity determined by the demand function.

We analyze the game under the standard \emph{all-or-nothing} demand model, where consumers purchase from the firm offering the lowest price:
\[
    q_i(\mathbf{p}) = \begin{cases}
        1 & \text{if} \quad p_i < \min_{j \neq i}p_j, \\
        0 & \text{else.}
    \end{cases}
\]

Under complete information, where each firm knows its rival's cost and strategy, the unique Nash equilibrium for the homogeneous case is characterized by marginal cost pricing:
\[
    p_i^* = c_i \quad \forall i.
\]
This result, known as the \textit{Bertrand Paradox} \cite{Bertrand1883}, implies zero profits even in oligopolies and serves as the competitive baseline equilibrium against which learning dynamics are evaluated.

\subsection{Incomplete Information}
In the Bayesian formulation, we assume incomplete information regarding firms' costs. Each firm $i \in \mathcal{N}$ has a private cost (type) $c_i \in C_i \subset \mathbb{R}$, which is independently distributed according to an atomless cumulative distribution function $F_i$. We assume (without loss of generality) that costs are normalized to the unit interval, $C_i = [0,1]$, and $F_i$ admits a strictly positive probability density function $f_i$.

A pure strategy for firm $i$ is a function $\beta_i: C_i \to P_i$ that maps its private cost to a price $p_i \in P_i \subset \mathbb{R}_{\geq0}$. The vector of strategies for all firms is denoted by $\boldsymbol{\beta} = (\beta_1, \dots, \beta_n)$. We define the space of admissible strategies $\mathcal{B}_i$ as a subset of the Banach space $V_i = W^{1,1}(0,1;F_i)$, which consists of $F_i$-integrable functions with $F_i$-integrable weak derivatives. Denote $\boldsymbol{\mathcal{B}}:= \prod_i\mathcal{B}_i$.

The goal of each firm is to maximize its expected utility (or profit), denoted by the functional $U_i: \boldsymbol{\mathcal{B}}\to \mathbb{R}$. Given the ex-post utility $u_i(\mathbf{p}, \mathbf{c}) = q_i(\mathbf{p})(p_i - c_i)$, the expected utility is defined as:
\begin{equation}
    \begin{split}
    U_i(\boldsymbol{\beta}) &:= \mathbb{E}_{\mathbf{c}} \left[ u_i(\boldsymbol{\beta}(\mathbf{c}), \mathbf{c}) \right] \\
    &= \int_{\boldsymbol{\mathcal{C}}} q_i(\boldsymbol{\beta}(\mathbf{c})) \, (\beta_i(c_i) - c_i) \, dF(\mathbf{c})
    \end{split}
\end{equation}
where $\boldsymbol{\beta}(\mathbf{c}) = (\beta_1(c_1), \dots, \beta_n(c_n))$ is the vector of prices realized given the cost vector $\mathbf{c}$.

A strategy profile $\boldsymbol{\beta^*} = (\beta_1^*, \dots, \beta_n^*) \in \mathcal{B}$ constitutes a \emph{Bayesian Nash Equilibrium} (BNE) if no firm can increase its expected utility by unilaterally deviating to a different strategy. That is, for all $i \in \mathcal{N}$:
\begin{equation}\label{ineq:BNE}
    U_i(\boldsymbol{\beta^*}) \ge U_i(\beta_i, \boldsymbol{\beta_{-i}^*}) \quad \forall \beta_i \in \mathcal{B}_i,
    \tag{BNE}
\end{equation}
where $\boldsymbol{\beta_{-i}^*}$ denotes the equilibrium strategies of all firms other than $i$.

In the symmetric setting with $\mathcal{B}_i \equiv  \mathcal{B}$ and symmetric strategies $\boldsymbol{\beta} = (\beta, \dots, \beta)$, the equilibrium condition simplifies. A symmetric BNE is a strategy $\boldsymbol{\beta^*} = (\beta^*, \dots, \beta^*)\in \boldsymbol{\mathcal{B}}$ such that:
\[
    U(\beta^*, \dots, \beta^*) \ge U(\beta, \beta^*, \dots, \beta^*) \quad \forall \beta \in \mathcal{B}.
\]

\begin{corollary}\label{cor:symmetric BNE AON}
    Consider the Bertrand duopoly model with all-or-nothing demand. A symmetric BNE is given by \(\beta^*(c) =\frac{1}{1-H(c)}\int_c^1 zh(z)dz\) with $H =1-(1-F)^{\,n-1}$ and $h=H'\geq 0$. In case of uniform prior on the costs, the BNE is given by $\beta^*(c) = \frac{c+1}{2}$.
\end{corollary}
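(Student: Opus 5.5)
Our plan is to verify the equilibrium directly through a pointwise best-response argument, using the standard first-price-auction (procurement) structure of the game. Fix all rivals playing a strictly increasing, continuous strategy $\beta^*$. Let $Y=\min_{j\neq i} c_j$ be the lowest rival cost; its distribution function is exactly $H(y)=1-(1-F(y))^{n-1}$, with density $h=H'\ge 0$. Since the prior is atomless and $\beta^*$ is strictly increasing, ties occur with probability zero. Firm $i$ with cost $c$ quoting price $p=\beta^*(x)$ then wins exactly when $Y>x$. Its interim profit is therefore
\[
\pi(x;c)=(1-H(x))\,(\beta^*(x)-c).
\]
Because $U_i$ is the $F_i$-integral of interim profits, it suffices to show that $x=c$ maximizes $\pi(\cdot;c)$ for every $c$. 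We also have to rule out prices outside the range $[\beta^*(0),\beta^*(1)]$.

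\textbf{Step 1 (structure of $\beta^*$).} Write $(1-H(c))\beta^*(c)=\int_c^1 z h(z)\,dz$. Integrating by parts gives
\[
\beta^*(c)=c+\frac{\int_c^1 (1-H(z))\,dz}{1-H(c)}.
\]
From this form we read off three facts:
\begin{itemize}
\item $\beta^*(c)=\mathbb{E}[Y\mid Y>c]$, which is strictly increasing and continuous on $[0,1)$, because $f>0$ makes $h>0$ on the interior.
\item $\beta^*(c)>c$.
\item $\beta^*(c)\to 1$ as $c\to 1$.
\end{itemize}
Differentiating the defining identity yields the ODE
\[
\frac{d}{dc}\big[(1-H)\beta^*\big]=-c\,h(c).
\]

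\textbf{Step 2 (no profitable deviation in range).} Using the ODE,
\[
\pi(x;c)=\int_x^1 z h(z)\,dz - c(1-H(x)),
\qquad\text{so}\qquad
\partial_x\pi(x;c)=-x h(x)+c\,h(x)=(c-x)h(x).
\]
This derivative is $\ge 0$ for $x<c$ and $\le 0$ for $x>c$, so $x=c$ is a global maximizer over the range of $\beta^*$. This is the usual envelope/single-crossing argument, and it avoids any second-order conditions.

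\textbf{Step 3 (out-of-range prices).}
\begin{itemize}
\item A price above $\beta^*(1^-)=1$ wins with probability zero. Its profit is therefore $0\le\pi(c;c)$.
\item A price below $\beta^*(0)$ wins with probability one, yielding $p-c<\beta^*(0)-c=\pi(0;c)\le\pi(c;c)$.
\end{itemize}
Integrating over $c$ gives the BNE inequality. We also note that $\beta^*\in W^{1,1}$, since its derivative $h(\beta^*-c)/(1-H)$ is integrable; the potential singularity at $c=1$ is controlled by $\beta^*-c\to 0$. This is the mildly delicate point, and for the uniform case it is trivial.

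\textbf{Step 4 (uniform duopoly).} With $n=2$ and $F(c)=c$, we have $H(c)=c$ and $h=1$. Then
\[
\beta^*(c)=\frac{1}{1-c}\int_c^1 z\,dz=\frac{1-c^2}{2(1-c)}=\frac{1+c}{2}.
\]

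The main obstacle is the boundary behavior at $c\to1$, where $1-H\to0$. This affects both the monotonicity/continuity of $\beta^*$ at the top and the out-of-range deviation argument. We will handle it via the conditional-expectation representation $\beta^*(c)=\mathbb{E}[Y\mid Y>c]\in(c,1]$, which gives $\beta^*(c)\to1$ as $c\to1$.
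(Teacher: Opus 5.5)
Your proof is correct and takes essentially the same route as the paper, which only invokes the equivalence with the first-price auction and defers to Krishna's derivation. Krishna's verification step (the payoff from mimicking type $x$ is maximized at the true type, and out-of-range bids are dominated) is exactly what you carry out in procurement form, including the $W^{1,1}$ membership check that the paper never makes explicit. The paper's citation of Chawla--Hartline for uniqueness goes beyond what the corollary as phrased asks you to prove.
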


The derivation is analogous to \citet{krishna2010auction} for the first-price auction, which is equivalent, and the uniqueness follows from \citet{chawla2013unique}. 

\section{Failure of Standard Convergence Criteria}

This section lays the foundation for studying the equilibrium problem and its associated variational inequality (VI) in a function space. 
It is well-known that the BNE problem can be formulated as an infinite-dimensional VI.  
For this VI, we need to establish a derivative of the expected utility in a function space. 
We {limit ourselves} to the standard setting with symmetric priors and strategies and the independent private values model \citep{krishna2010auction}. 

Monotonicity of the gradient operator serves as a generalization of convexity to strategic games, providing the standard theoretical guarantee for the global convergence of various learning algorithms \citep{facchinei2003finite}. The Minty-type VI is another strong stability condition showing that the equilibrium is a global attractor for wide classes of dynamics. 
However, for the Bertrand competition model, we show that the gradient operator does not satisfy the Minty-type VI, and thus also monotonicity, rendering standard variational inequality techniques insufficient for establishing the convergence of learning agents to the BNE.

\subsection{Variational Inequalities and BNE}

To reformulate the equilibrium condition as a variational inequality, we assume the strategy space $V_i$ is a Banach space and the set of admissible strategies $\mathcal{B}_i \subset V_i$ is convex and closed. We define the Gateaux derivative $DU_i(\boldsymbol{\boldsymbol{\beta}})[d]$ of the expected utility $U_i$ at $\boldsymbol{\boldsymbol{\beta}}$ in the direction $d \in V_i$ as:
\begin{align}\label{eq:Gateaux}
    DU_i(\boldsymbol{\boldsymbol{\beta}})[d] := \lim_{\varepsilon\to 0} \varepsilon^{-1} \big(U_i(\beta_i + \varepsilon d, \boldsymbol{\boldsymbol{\beta}_{-i}}) - U_i(\boldsymbol{\boldsymbol{\beta}}) \big).
\end{align}
This is the generalization of a directional derivative in a function space. 
Assuming the Gateaux derivative exists and is a continuous linear functional, a \emph{necessary condition} for a BNE is the Stampacchia-type Variational Inequality (VI):
\begin{problem}
    Find $\boldsymbol{\boldsymbol{\beta}^*} \in \boldsymbol{\mathcal{B}}$ such that for all $i\in \mathcal{N}$:
    \begin{align}\label{ineq:VI}
        DU_i(\boldsymbol{\beta^*})[\beta_i - \beta^*_i] \leq 0 \qquad \forall \beta_i \in \mathcal{B}_i .
        \tag{VI}
    \end{align}
\end{problem}
A \emph{sufficient condition}, known as the Minty-type VI, requires the inequality to hold for all strategy profiles:
\begin{problem}
    Find $\boldsymbol{\beta^*} \in \boldsymbol{\mathcal{B}}$ such that for all $i \in \mathcal{N}$:
    \begin{align}\label{ineq:minty}
        DU_i(\boldsymbol{\beta})[\beta_i - \beta^*_i] \leq 0 \qquad \forall \boldsymbol{\beta} \in \boldsymbol{\mathcal{B}}.
        \tag{MVI}
    \end{align}
\end{problem}
The relationship between the equilibrium and the variational formulations is strictly hierarchical. In general, the set of solutions to the Minty-type VI \eqref{ineq:minty} is a subset of the BNE, which is itself a subset of the solutions to the Stampacchia-type VI \eqref{ineq:VI} \citep{Cavazzuti02}:
\[
    \text{Sol(MVI)} \subseteq \text{BNE} \subseteq \text{Sol(VI)}.
\]
Solutions to the Stampacchia VI are BNEs if the expected utility $U_i$ is pseudoconvex in $\beta_i$. 

\subsection{Relevant Function Spaces}\label{sec:functionspaces}

We consider the Bayesian Bertrand competition under symmetry and independent private values. Thus,
\[
P_i = P,\quad C_i = C,\quad C_i \sim_{iid} F,\quad U_i = U, \quad V_i = V,
\]
for all $i\in\mathcal{N}$. By private values, each firm’s strategy depends only on its own cost, i.e.,
$\beta_i(\boldsymbol{c})=\beta_i(c_i)$. 

We assume $C=[0,1]$ and $F\in C^{0,1}([0,1])$, i.e., $F$ is strictly increasing and Lipschitz continuous.
To formulate the variational inequality, we work in the Sobolev–Banach space
\[
V := W^{1,1}(0,1;F)
= \{\beta\in L^1(0,1;F)\mid \beta'\in L^1(0,1;F)\},
\]
noting that $V\subset AC([0,1])$. This space provides the minimal regularity required for well-defined
expected utilities and learning dynamics while remaining compatible with piecewise-linear strategies.

We restrict attention to the admissible strategy set
\begin{equation}\label{eq:B_delta}
\mathcal{B}_\delta :=
\bigl\{\beta\in V:\ 0\le\beta\le1,\ \beta(1)=1,\ \beta'\ge\delta\ \text{$F$-a.e.}\bigr\},
\end{equation}
for some $\delta>0$ and $\boldsymbol{\mathcal{B}_\delta} = \prod_i \mathcal{B}_\delta$. The lower bound on $\beta'$ ensures $\mathcal{B}_\delta$ is convex and closed and
guarantees a uniform bound $(\beta^{-1})'\le\delta^{-1}$, which is crucial for stability of the inverse pricing function.

\subsection{VI Representation of the Bertrand Competition}

With these definitions, the symmetric \eqref{ineq:BNE}, \eqref{ineq:VI}, and \eqref{ineq:minty} simplify to conditions on a single representative strategy. A symmetric BNE $\boldsymbol{\beta}^* = (\beta^*, \dots, \beta^*) \in \boldsymbol{\mathcal{B}}_\delta$ satisfies $U(\boldsymbol{\beta}^*) \geq U(\beta, \boldsymbol{\beta}_{-1}^*)$ for all $\beta \in \mathcal{B}_\delta$. Let us use $\beta^*$ to refer to $\boldsymbol{\beta}^*$. 
Correspondingly, the symmetric VI and MVI become:
\begin{problem}[Symmetric VI and MVI]
    A solution $\beta^* \in \mathcal{B}_\delta$ to the symmetric VI satisfies:
    \begin{align}\label{ineq:symVI}
        DU(\boldsymbol{\beta}^*)[\beta - \beta^*] \leq 0 \qquad \forall \beta \in \mathcal{B}_\delta.
        \tag{sym-VI}
    \end{align}
    A solution $\beta^* \in \mathcal{B}_\delta$ to the symmetric MVI satisfies (with $\boldsymbol{\tilde{\beta}} = (\tilde{\beta}, \dots, \tilde{\beta})$):
    \begin{align}\label{ineq:symMVI}
        DU(\beta, \boldsymbol{\tilde{\beta}}_{-1})[\beta - \beta^*] \leq 0 \qquad \forall \beta, \tilde{\beta} \in \mathcal{B}_\delta.
        \tag{sym-MVI}
    \end{align}
\end{problem}
Here $DU(\boldsymbol{\beta})[d]$ is the Gateaux-derivative defined by \eqref{eq:Gateaux}.

As we show in subsequent sections, while a unique symmetric BNE $\beta^*(c)$ exists, the game violates the Minty-type VI  \eqref{ineq:symMVI}.
We can now derive the Gateaux derivative.
\begin{lemma}\label{lemma:gateaux}
The Gâteaux derivative of $U$ at $(\beta, \boldsymbol{\tilde{\beta}}_{-1}) \in \boldsymbol{\mathcal{B}}_\delta$ along $d \in V$ is given by
\begin{equation}\label{eq:derivative}
\begin{split}
    D&U(\beta, \boldsymbol{\tilde{\beta}}_{-1})[d]
        = \int_0^1 d(c)\,\chi_{\{\beta(c) > \tilde{\beta}(0)\}} \, \\
        &\left(1-(\beta(c) - c) \frac{h(\tilde{\beta}^{-1}(\beta(c)))}{\tilde{\beta}'(\tilde{\beta}^{-1}(\beta(c)))} - H(\tilde{\beta}^{-1}(\beta(c)))\right)dF(c)
\end{split}
\end{equation}
\end{lemma}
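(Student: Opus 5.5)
First reduce $U$ to a one-dimensional integral in the firm's own cost, then differentiate under the integral sign. Since opponents play $\tilde\beta\in\mathcal{B}_\delta$, which is strictly increasing and continuous with $\tilde\beta'\ge\delta$, and costs are i.i.d.\ with atomless $F$, ties occur with probability zero. For a fixed own price $p$, firm $1$ wins exactly when $\tilde\beta(c_j)>p$ for all $j\neq 1$. If $p<\tilde\beta(0)$ this probability is $1$. If $p\in[\tilde\beta(0),1]$ it equals $(1-F(\tilde\beta^{-1}(p)))^{n-1}=1-H(\tilde\beta^{-1}(p))$. Defining $G(p):=1-H(\tilde\beta^{-1}(p))$ for $p\ge\tilde\beta(0)$ and $G(p):=1$ below, Fubini gives
\[
U(\beta,\boldsymbol{\tilde\beta}_{-1})=\int_0^1 (\beta(c)-c)\,G(\beta(c))\,dF(c).
\]

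Next, differentiate the integrand pointwise in $\varepsilon$. Write $\phi_c(p):=(p-c)G(p)$. On $\{p>\tilde\beta(0)\}$, $G$ is a composition of $H$ (with $h=(n-1)(1-F)^{n-2}f$) and the inverse of an absolutely continuous map whose derivative is bounded below by $\delta$. So $(\tilde\beta^{-1})'(p)=1/\tilde\beta'(\tilde\beta^{-1}(p))$ almost everywhere and
\[
\phi_c'(p)=1-H(\tilde\beta^{-1}(p))-(p-c)\frac{h(\tilde\beta^{-1}(p))}{\tilde\beta'(\tilde\beta^{-1}(p))}.
\]
On $\{p<\tilde\beta(0)\}$, $\phi_c(p)=p-c$, and its derivative is taken to be the value in the indicator convention of \eqref{eq:derivative}. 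The set where $\beta(c)=\tilde\beta(0)$ is a single point $c$, because $\beta$ is strictly increasing, so it has $F$-measure zero. Evaluating at $p=\beta(c)$ and multiplying by $d(c)$ gives the integrand of \eqref{eq:derivative}, by the chain rule $\frac{d}{d\varepsilon}\phi_c(\beta(c)+\varepsilon d(c))|_{\varepsilon=0}=\phi_c'(\beta(c))\,d(c)$. Linearity and continuity in $d$ then follow from the explicit formula.

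The main obstacle is justifying the interchange of limit and integral. I would apply dominated convergence to the difference quotients $\varepsilon^{-1}(\phi_c(\beta(c)+\varepsilon d(c))-\phi_c(\beta(c)))$. Since $V\subset AC([0,1])$, $d$ is bounded. By the mean value theorem, each quotient is bounded by $|d(c)|\sup|\phi_c'|$. This supremum is finite because $0\le H\le 1$, $h\le (n-1)\sup f$ (finite since $F$ is Lipschitz), $|p-c|\le 2$ on the relevant range, and $1/\tilde\beta'\le\delta^{-1}$. This uniform bound is exactly why the constraint $\beta'\ge\delta$ is imposed. Two subtleties need care. First, $\phi_c$ is only Lipschitz, not $C^1$, when $\tilde\beta'$ or $f$ is merely $L^1$/a.e.\ defined, so I would use the Lipschitz bound for domination and prove pointwise convergence of the quotients only for $F$-a.e.\ $c$. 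For that I would argue that $\beta$ maps null sets to null sets: because $\beta'\ge\delta$, the preimage under $\beta$ of a Lebesgue-null set of non-differentiability points of $G$ is null. Second, there is the kink at $p=\tilde\beta(0)$, which is handled by the measure-zero observation above.
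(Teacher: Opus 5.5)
\textbf{The gap.} The step that fails is your sentence about the region $p<\tilde\beta(0)$. There your own (correct) reduction gives $G\equiv 1$, hence $\phi_c(p)=p-c$ and $\phi_c'(p)=1$. The integrand of \eqref{eq:derivative}, however, is $0$ whenever $\beta(c)<\tilde\beta(0)$. No convention reconciles the two, and in fact the displayed formula is false on that set.

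Take $n=2$, $F$ uniform, $\tilde\beta(c)=\frac{1+c}{2}$ and $\beta(c)=c$; both lie in $\mathcal{B}_\delta$ for $\delta\le\frac12$. Let $d\ge0$, $d\not\equiv0$, be a tent function supported in $[\frac18,\frac38]$. For small $|\varepsilon|$ the perturbed price on the support of $d$ stays below $\frac12=\tilde\beta(0)$. So firm $1$ wins surely there, and $U(\beta+\varepsilon d,\tilde\beta)-U(\beta,\tilde\beta)=\varepsilon\int_0^1 d\,dc$. Hence $DU[d]=\int_0^1 d\,dc>0$. The right-hand side of \eqref{eq:derivative} vanishes, because $d=0$ on $\{\beta>\tilde\beta(0)\}=\{c>\frac12\}$.

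The paper's own proof contains the matching slip. When evaluating $I_2(c)=\int_0^1\chi_{\{\beta(c)+\varepsilon d(c)<\tilde\beta(z)\}}\,dH(z)$, it inserts the factor $\chi_{\{\beta(c)+\varepsilon d(c)>\tilde\beta(0)\}}$, although $I_2(c)=1$ when the perturbed price undercuts $\tilde\beta(0)$. Only its $I_1$ term (the $h$-part) is legitimately cut off by the indicator.

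\textbf{What your argument actually proves.} Carried out honestly, your argument gives
\[
DU(\beta,\boldsymbol{\tilde\beta}_{-1})[d]=\int_0^1 d(c)\Bigl(\chi_{\{\beta(c)<\tilde\beta(0)\}}+\chi_{\{\beta(c)>\tilde\beta(0)\}}\Bigl(1-H(\tilde\beta^{-1}(\beta(c)))-(\beta(c)-c)\frac{h(\tilde\beta^{-1}(\beta(c)))}{\tilde\beta'(\tilde\beta^{-1}(\beta(c)))}\Bigr)\Bigr)\,dF(c).
\]
In other words, the indicator should multiply only the $h$-term, with the convention $H(\tilde\beta^{-1}(p)):=0$ for $p\le\tilde\beta(0)$. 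This agrees with the stated formula for all $d$ exactly when $\beta(0)\ge\tilde\beta(0)$. That holds, for instance, at symmetric profiles $\tilde\beta=\beta$, which is the only case the paper uses later. So you should either prove this corrected formula or add that hypothesis, rather than invoke a convention.

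\textbf{Comparison with the paper.} Apart from this, your route is the paper's. The paper splits the difference quotient into $I_1$ (change in winning probability times margin) and $I_2$ (change in margin times winning probability). It dominates $I_1$ via the Lipschitz constant $\delta^{-1}\|h\|_{L^\infty}$ of $H\circ\tilde\beta^{-1}$, which is exactly your bound. Integrating out the opponents first and applying the chain rule to $\phi_c$ is the same argument. Your null-set justification of the a.e.\ pointwise limit is more careful than the paper's.

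One small addition: extend $G$ by $0$ above $1$. It stays Lipschitz since $G(1)=0$, and then the domination also covers perturbed prices exceeding $1$.
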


The proof for this lemma can be found in the Appendix \ref{app:proof of gateaux}, and it follows earlier work by \citet{bichler24}, who show that standard sufficient conditions such as monotonicity and the Minty-type VI  fail in first-price auctions.
However, their analysis does not cover oligopoly price competition, stochastic learning, or RRM–type algorithms, which are the focus of the present work.

\subsection{Violation of the Minty-type VI } 

In the simplified case of $n=2$ players with uniform prior in the costs, the BNE as in \ref{cor:symmetric BNE AON} does not satisfy the symmetric Minty-type VI  \eqref{ineq:symMVI}, which implies that it can also not satisfy monotonicity \citep{facchinei2003finite}.

\begin{lemma} \label{lemma:bertrand-no-minty} 
    In the case of two firms with uniform priors, i.e., for $n=2$ and $F=\textup{Id}$, the unique BNE $\beta^*(c) = \frac{1}{2}(1+c)$ does not satisfy \eqref{ineq:symMVI}. In particular, the condition is also not satisfied locally for any open neighborhood of the BNE for any $0 < \delta \leq \frac{1}{10}$. 
\end{lemma}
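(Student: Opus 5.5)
The plan is to exhibit, inside any neighbourhood of $\beta^*$, an explicit pair $\beta,\tilde\beta\in\mathcal{B}_\delta$ with $DU(\beta,\tilde{\boldsymbol\beta}_{-1})[\beta-\beta^*]>0$, using the formula of Lemma~\ref{lemma:gateaux}. For $n=2$ and $F=\mathrm{Id}$ we have $H(c)=c$ and $h\equiv1$. Writing $x(c):=\tilde\beta^{-1}(\beta(c))$, the quantity to make positive is
\[
\int_0^1 (\beta-\beta^*)(c)\,\chi_{\{\beta(c)>\tilde\beta(0)\}}\Bigl(1-x(c)-\frac{\beta(c)-c}{\tilde\beta'(x(c))}\Bigr)\,dc .
\]

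\textbf{Step 1: linear opponents cannot work.} If $\tilde\beta(c)=a+(1-a)c$, the bracket equals $(1+c-2\beta(c))/(1-a)=-2(\beta-\beta^*)(c)/(1-a)$. The integrand is then $\le0$ for every $\beta$. In particular, with $\tilde\beta=\beta^*$ the integrand is $-4(\beta-\beta^*)^2\le 0$. So the counterexample must use a nonlinear opponent $\tilde\beta$, and the mechanism to exploit is the factor $1/\tilde\beta'$, which can be as large as $1/\delta$.

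\textbf{Step 2: construction.} I would take $\beta=\beta^*-\varepsilon(1-c)$, i.e. prices lowered by a small linear tilt. This has slope $\tfrac12+\varepsilon\ge\delta$, satisfies $\beta(1)=1$, and gives $\beta-\beta^*<0$ on $[0,1)$. For $\tilde\beta$ I would take $\beta^*$ modified on a short interval $[x_0,x_0+\eta]$ in the interior: there it has the minimal slope $\delta$, and on an adjacent interval of length $\eta'$ a steeper slope compensates the lost height. Hence $\tilde\beta=\beta^*$ outside $[x_0,x_0+\eta+\eta']$, $\tilde\beta(1)=1$, and $\|\tilde\beta-\beta^*\|_{W^{1,1}}=O(\eta)$.

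Set $I:=\{c:\ x(c)\in[x_0,x_0+\eta]\}$, the costs whose price lands on the flat piece. This set has Lebesgue measure $\approx 2\delta\eta$, and on it the bracket is $\approx 1-x_0-(1-c)/(2\delta)$, which is strongly negative because $\delta\le\tfrac1{10}$. Multiplying by $\beta-\beta^*=-\varepsilon(1-c)<0$ gives a positive contribution of order $\varepsilon\eta\,(1-c)^2$. On the compensating steep piece the bracket is bounded, and it is negative for $x$ near $x_0$ when $\delta$ is small, so this piece contributes $O(\varepsilon\eta')$ with a favourable or controllable sign. Everywhere else $\tilde\beta=\beta^*$, so by Step 1 the contribution is $-4\varepsilon^2\int(1-c)^2\,dc=O(\varepsilon^2)$.

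Choosing $\varepsilon\ll\eta$ makes the positive $O(\varepsilon\eta)$ term dominate. Letting both $\varepsilon,\eta\to0$ then gives violations in every neighbourhood of $\beta^*$, which is the local claim; the global claim follows a fortiori.

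\textbf{Main obstacle.} The delicate step is the bookkeeping of the compensating steep segment and of the indicator near $\tilde\beta(0)$. One must check that the positive gain from the slope-$\delta$ segment, which scales like $\varepsilon\eta(1-c)\bigl[(1-c)/(2\delta)-(1-x_0)\bigr]$, strictly beats the steep segment's contribution. This is where an explicit threshold on $\delta$ such as $\delta\le\tfrac1{10}$ should appear.

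I would first try steep slope $1$ (or $\tfrac12+\kappa$), choose $x_0$ near $0$ so that the factor $1-c$ is large, and compute both pieces in closed form since everything is piecewise linear. If the signs misbehave, the fallback is to place the steep compensating piece where $\beta(c)\le\tilde\beta(0)$, so the indicator kills it. That would mean putting the flat segment at the far left and raising the intercept of $\tilde\beta$ slightly instead.
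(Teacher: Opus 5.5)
\textbf{The gap.} Your Step 1 is correct. The quantitative core of Step 2 fails, though. The $O(\varepsilon\eta)$ gain from the slope-$\delta$ segment does not dominate: the compensating segment cancels it \emph{exactly} at order $\varepsilon\eta$, for every $\delta$, every compensating slope $s$ and every $x_0$.

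On the steep piece you must have $s>\tfrac12$, since it has to catch up with $\beta^*$. There the bracket is $\approx(1-x_0)\bigl(1-\tfrac{1}{2s}\bigr)>0$, not negative. That piece has measure $\approx 2s\eta'$ and contributes $\approx-\varepsilon(1-x_0)^2(2s-1)\eta'$. The height constraint $(s-\tfrac12)\eta'=(\tfrac12-\delta)\eta$ turns this into $-\varepsilon(1-x_0)^2(1-2\delta)\eta$. This is precisely minus the flat-piece gain $2\delta\eta\cdot\varepsilon(1-x_0)^2\bigl(\tfrac{1}{2\delta}-1\bigr)$.

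The structural reason: substitute $x=\tilde\beta^{-1}(\beta(c))$, so $dc=\tilde\beta'(x)\,dx/\beta'(c)$. The Jacobian cancels the factor $1/\tilde\beta'$ you wanted to exploit. To leading order the modified region then contributes $-2\varepsilon(1-x_0)^2\int(\tilde\beta'-\tfrac12)\,dx$, which is $0$ because $\tilde\beta=\beta^*$ at both ends. So no threshold on $\delta$ can come out of this comparison. The prescription $\varepsilon\ll\eta$ is also not enough: with, e.g., $\varepsilon=\eta^{3/2}$, the background term $-\tfrac43\varepsilon^2$ beats the surviving gain, which is only $O(\varepsilon\eta^2)$, and the pairing is negative.

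\textbf{How to repair it, and how it compares.} The construction survives at the next order. Take $\beta=\beta^*-\varepsilon(1-c)$ and $\tilde\beta=\beta^*+g$, with $g$ supported in a compact subinterval of $(0,1)$. Use the same substitution, the identity $(1-x)\tilde\beta'-\tilde\beta=\frac{d}{dx}\bigl[(1-x)\tilde\beta\bigr]$, and one integration by parts. This gives
\[
DU(\beta,\tilde{\boldsymbol\beta}_{-1})[\beta-\beta^*]=-6\varepsilon\int_0^1(1-x)\,g(x)\,dx+6\varepsilon\int_0^1 g(x)^2\,dx+O(\varepsilon^2).
\]
What matters is therefore the sign of $g$, not the slope $\delta$. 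Your flat-then-steep ordering happens to give $g\le0$: against a rival pricing below $\beta^*$, cutting prices is profitable to first order. Since $\int_0^1(1-x)|g|\,dx\asymp\eta^2$, you get a violation once $\varepsilon\ll\eta^2$, for every $\delta<\tfrac12$. The reverse ordering ($g\ge0$) would make the pairing negative.

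With that correction your off-diagonal route is valid and genuinely different from the paper's. The paper stays on the diagonal $\tilde\beta=\beta$. It takes $\beta$ with slopes $\tfrac15$ on $[0,a]$ and $\tfrac45$ on $[a,2a]$, where $a=\tfrac1{k+2}$, and $\beta=\beta^*$ afterwards. The pairing then equals $\tfrac{27}{160}a^2-\tfrac{21}{80}a^3>0$ in closed form, with no scale separation to manage. It also violates even the weaker diagonal inequality $DU(\beta,\boldsymbol\beta_{-1})[\beta-\beta^*]\le0$.

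Finally, drop the fallback of letting the indicator $\chi_{\{\beta(c)>\tilde\beta(0)\}}$ discard mass. For $\beta(c)<\tilde\beta(0)$ the firm undercuts every rival price and wins with probability one, so its true marginal utility there is $d(c)$, not $0$. Once that term is restored, the raised-intercept variant also loses its $O(\varepsilon\eta)$ gain. Since there $\tilde\beta\ge\beta^*$, the next-order term has the wrong sign.
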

The proof leverages the parametrization $\mathcal{B}_\delta^m$ defined in \eqref{eq:constrained set} and employs a numerical search over a discretized grid to identify a counterexample. A complete proof is provided in Appendix \ref{app:proof of bertrand no minty} and motivated by \cite{bichler24}.

\section{RRM Algorithms in Repeated Bertrand Competition}

As a result of this violation of the Minty-type VI , classical convergence guarantees for first-order methods based on monotone operator theory do not apply in the Bayesian Bertrand competition.
When Minty-type VI  fail, not much is known about the convergence of learning algorithms in games. Convergence can no longer be established at the level of the game operator alone.
We adopt a dynamical-systems perspective and study asymptotic stability under stochastic RRM learning dynamics \cite{Mertikopoulos2023unifiedstochasticapproximationframework}. This framework subsumes a broad class of no-regret algorithms, including projected gradient ascent, mirror descent, multiplicative-weights-type updates, stochastic gradient ascent, optimistic mirror descent, and continuous-time variants such as Replicator Dynamics under entropic regularization \cite{HofbauerSigmund1990AML}. However, it is important to notice that the mean dynamics depend on the specific regularizer chosen. By analyzing the mean dynamics induced by the euclidean regularizer as in Example \ref{primary example}, we can show that despite the failure of the Minty-type VI, the BNE remains asymptotically stable under all Euclidean RRM.

\subsection{Parametrization}

To apply learning algorithms to the infinite-dimensional Bayesian game described previously, we first need to transform the problem into a finite-dimensional continuous game. We approximate the infinite-dimensional strategy space $V$ using the space of piecewise linear functions $\mathcal{L}_m$:
$\mathcal{L}=\bigcup_{m\ge1}\mathcal{L}_m\subset\mathcal{B}_\delta$
\[
\mathcal{L}_m :=
\left\{
\beta(c)=x_0+\sum_{k=1}^m x_k\varphi_k(c)\ \middle|\ x_k\ge\delta
\right\},
\]
where $x_0:=1-\frac{1}{m}\sum_{k=1}^m x_k$ and
\[
\text{with } \varphi_k(c) = \begin{cases} 
0 & \text{for } 0 \le c \le \frac{k-1}{m} \\
c - \frac{k-1}{m} & \text{for } \frac{k-1}{m} < c \le \frac{k}{m} \\
\frac{1}{m} & \text{for } \frac{k}{m} < c \le 1.
\end{cases}
\]

The set of all strategies in $\mathcal{L}_m$  is given by the corresponding parameter space
\begin{equation}\label{eq:constrained set}
    \mathcal{B}_\delta^m := \big\{ \mathbf{x} \in \R^m \; : \; g_i(\mathbf{x}) \geq 0 \; \text{for } i \in \{0,...,m\} \big\},
\end{equation}
where $g_i(\mathbf{x}) = x_i - \delta \geq 0$ for $i \in \{1,...,m\}$ and $g_{0}(\mathbf{x}) = m - \sum_k x_k$. Piecewise linear functions are dense in $W^{1,1}(0,1)$ and can therefore approximate admissible pricing strategies arbitrarily well.
In our setting, these coefficients determine the geometric shape of the price function; specifically, we view the learning process as learning the \emph{slopes} (or node values) of each linear piece of the price function.

As indicated earlier, we further restrict to symmetric strategies, so that a single parameter vector $x$ defines the complete strategy profile. The expected utility functional $U(\beta)$ and its symmetric marginal utility can then be rewritten as
\[
    u(\mathbf{x}) := U(\boldsymbol{\beta}(\mathbf{x}, \dots, \mathbf{x})),
\]
\[
    v_i(\mathbf{x}) = \frac{\partial u(\mathbf{x}, \mathbf{x'},\dots, \mathbf{x'})}{\partial x_i} \bigg \vert_{\mathbf{x} = \mathbf{x'}} \quad \text{for} \quad i \in \{1,\dots,m\},
\]
with $\mathbf{x} \in \mathcal{B}_\delta^m$ and $\mathbf{v}(\mathbf{x})=(v_i(\mathbf{x}))_{i=1}^m$. 
We consider the parametrized setting within a normal form game $\mathcal{G} = (\mathcal{N}, \boldsymbol{\mathcal{X}}, u)$ with  $\boldsymbol{\mathcal{X}} = \prod_i \mathcal{B}_\delta^m$ and $u$ the payoff functions derived from the parametrization.


\subsection{RRM Algorithms}
Let us define the learning dynamics on the slope parameters under the assumption of symmetric strategies. In this setting, all players share the same strategy space $\mathcal{X}$, regularizer $h$, and initialization, resulting in a single representative process. The symmetric RRM recursion is given by:
\begin{equation}\label{eq:RRM}
        \mathbf{x}_{n+1} = Q(\mathbf{y}_{n+1}), \quad \mathbf{y}_{n+1} = \mathbf{y}_n + \gamma_n \hat{\mathbf{v}}_n
    \tag{RRM}
\end{equation}
where:
\begin{enumerate}
    \item $\mathbf{x}_n \in \mathcal{X}$ denotes the common action (slope parameters) adopted at stage $n = 1, 2, \dots$
    \item $\hat{\mathbf{v}}_n \in \mathcal{Y}$ is the sequence of individual ``gradient-like'' signals.
    \item $\mathbf{y}_n \in \mathcal{Y}$ is the auxiliary state variable.
    \item $\gamma_n > 0$ is a step-size sequence.
    \item $Q \colon \mathcal{Y} \to \mathcal{X}$ is the mirror map.
\end{enumerate}
We view $\mathbf{x}_n$ as a stochastic process on a complete probability space $(\Omega, \mathcal{F}, \mathbb{P})$, denote $\mathcal{F}_n := \mathcal{F}(\mathbf{x}_1, \dots, \mathbf{x}_n) \subset \mathcal{F}$ and decompose $\hat{\mathbf{v}}_n$ as
\begin{equation}\label{eq:sym_signal_decomp}
\hat{\mathbf{v}}_n = \mathbf{v}(\mathbf{x}_n) + \boldsymbol{U}_n + \mathbf{b}_n
\end{equation}
where $\mathbf{v}(\mathbf{x}_n)$ represents the expected gradient of the representative player against the symmetric profile, and:
\begin{equation}
\boldsymbol{U}_n = \hat{\mathbf{v}}_n - \mathbb{E}[\hat{\mathbf{v}}_n | \mathcal{F}_n] \quad \text{and} \quad \mathbf{b}_n = \mathbb{E}[\hat{\mathbf{v}}_n | \mathcal{F}_n] - \mathbf{v}(\mathbf{x}_n).
\end{equation}
As before, $\boldsymbol{U}_n$ is a zero-mean random error, and $\mathbf{b}_n$ captures the systematic offset (bias) of the estimator.

\begin{assumption}\label{parameter conditions}
We assume for some $0 \leq \ell_\gamma \le 1$, $\ell_b \in \R_{>0} , \ell_\sigma \in \mathbb{R}$ that
\begin{equation}
    \gamma_n = 1/n^{\ell_\gamma}, \quad \mathbb{E}[\|\mathbf{b}_n\|] = \mathcal{O}\left(\frac{1}{n^{\ell_b}}\right), \quad \mathbb{E}[\|\hat{\mathbf{v}}_n\|] = \mathcal{O}(n^{\ell_\sigma}),
\end{equation}
where $\|\cdot\|$ denotes the Euclidean norm $\|\cdot\|_2$ .
\end{assumption}
Observe that $0 < \ell_\gamma \le 1$ guarantees $\sum \gamma_n = \infty$ and $\gamma_n \to 0$, which represents a fundamental stochastic approximation condition. Further, $\ell_b>0$ ensures that the bias is asymptotically vanishing.

The \emph{mirror map} associated with the representative regularizer $h$ is defined for all $\mathbf{y} \in \mathcal{Y}$ as
\begin{equation}
    Q(\mathbf{y}) = \nabla h^*(\mathbf{y})= \arg\max_{\mathbf{x} \in \mathcal{X}} \{\langle \mathbf{y}, \mathbf{x} \rangle - h(\mathbf{x})\}.
    \label{eq:mirror map}
\end{equation}

\begin{definition}\label{def:regularizer}
We say that $h \colon \mathcal{X} \to \mathbb{R} \cup \{+\infty\}$ is a \emph{regularizer} on $\mathcal{X}$ if:
\begin{enumerate}
    \item $h$ is \emph{supported} on $\mathcal{X}$, i.e.,
    $\mathrm{dom}\, h = \{\mathbf{x} \in \mathcal{X} \mid h(\mathbf{x}) < \infty\} = \mathcal{X}$.
    \item $h$ is continuous and \emph{strongly convex} on $\mathcal{X}$, i.e., there exists a constant $K > 0$ such that
    \begin{equation}
    \begin{split}
        &h(\lambda \mathbf{x} +(1-\lambda) \mathbf{x}') \\
        &\le \lambda h(\mathbf{x}) + (1-\lambda) h(\mathbf{x}')
        - \frac12 K \lambda(1-\lambda) \|\mathbf{x}' - \mathbf{x}\|^2
    \end{split}
        \label{eq:strongconvex_sym}
    \end{equation}
    for all $\mathbf{x}, \mathbf{x}' \in \mathcal{X}$ and all $\lambda \in [0,1]$.
\end{enumerate}
\end{definition}

The RRM framework subsumes a broad class of standard no-regret learning algorithms, including mirror descent and multiplicative-weights updates, under appropriate step-size and regularity conditions. These conditions include appropriately diminishing step sizes, bounded stochastic noise, and a strongly convex regularizer, which are standard conditions that we assume in the following.

\subsection{Mean and Primal Dynamics}\label{sec:meandynamics}
This section derives the continuous-time mean dynamics associated with RRM learning by averaging out stochastic noise and vanishing step sizes. In the presence of feasibility constraints, these dynamics take the form of a projected system, or equivalently, a differential inclusion involving the normal cone of the feasible set. While unconstrained mirror descent induces smooth Riemannian gradient flows, active constraints break global smoothness and necessitate a projected formulation. These mean dynamics provide the deterministic backbone for the Lyapunov-based convergence analysis developed in Section \ref{sec:convergence}.

\begin{definition}
Let us define the \textit{mean dynamics} to a specific \eqref{eq:RRM} algorithm corresponding 
\begin{equation}\label{eq:md}
\dot{\mathbf{y}} = \mathbf{v}(\mathbf{x}),\qquad \mathbf{x} = Q(\mathbf{y}),\qquad \mathbf{y}\in \mathcal{Y}.
\end{equation}
\end{definition}
These projected dynamics can be viewed as the dual counterpart of the Riemannian gradient dynamics induced by the regularizer $h$ \citep{mertikopoulos2018riemanniangamedynamics}, which take the form
\begin{equation}\label{eq:PD}
    \dot{\mathbf{x}} = \left( \nabla^2h(\mathbf{x})\right)^{-1}\mathbf{v}(\mathbf{x}),
    \qquad \mathbf{x} \in \mathcal{X}.
\end{equation}
In the absence of active constraints, the mirror-descent mean dynamics reduce to the smooth Riemannian gradient flow \eqref{eq:PD} induced by the regularizer $h$. When state constraints are present, however, the mirror map ceases to be globally smooth and the resulting dynamics must be expressed as a projected system, or equivalently, as a differential inclusion involving the normal cone of the feasible set.

In this work, the representative example is the algorithm resulting from the Euclidean regularizer:

\begin{example}[Projected Gradient Ascent]\label{primary example}
    In case of the Euclidean regularizer $h(x) = \|x\|_2^2/2$ and the constrained action profile $\mathcal{X} = \mathcal{B}_\delta^m$, the auxiliary space becomes $\mathcal{Y} = \R^m$ and the mirror map \eqref{eq:mirror map} is 
    \begin{equation}\label{eq:euclidean projection}
        Q(\mathbf{y}) = \Pi_{\mathcal{B}_\delta^m}(\mathbf{y}) := \arg\min_{\mathbf{x
        } \in \mathcal{B}_\delta^m} \|\mathbf{y} - \mathbf{x}\|.
    \end{equation}
    The mean dynamics then take the simple form
    \begin{equation}
        \dot{\mathbf{y}} = \mathbf{v}(\Pi_{\mathcal{B}_\delta^m}(\mathbf{y})), \quad \text{for} \quad \mathbf{y} \in \mathcal{\R}^m
    \end{equation}
    and the primal dynamics are
    \begin{equation}\label{eq:projected-dynamics}
        \dot{\mathbf{x}} = \Pi_{T\mathcal{B}_\delta^m(\mathbf{x})}(\mathbf{v}(\mathbf{x})), \quad \text{for} \quad \mathbf{x} \in \mathcal{B}_\delta^m,
    \end{equation}
    where $\Pi_{T\mathcal B_\delta^m(\mathbf{x})}$ is the projection onto the tangent cone at $\mathbf{x}$.
    or, equivalently, the differential inclusion
    \begin{equation}\label{eq:projected-inclusion}
    \dot{\mathbf{x}} \in \mathbf{v}(\mathbf{x}) - N_{\mathcal{B}_\delta^m}(\mathbf{x}),
    \end{equation}
    where $N_{\mathcal{B}_\delta^m}(\mathbf{x})$ denotes the Normal Cone to the set $\mathcal{B}_\delta^m$ at the point $\mathbf{x}$.
\end{example}

Other regularizers yield different dynamics, such as the entropic regularizer leading to entropic mirror ascent, with the continuous-time mean dynamics being replicator dynamics \cite{HofbauerSigmund2003}. The following shows that mean dynamics are invariant under affine reparametrizations of the regularizer, and hence reflect intrinsic learning geometry rather than arbitrary coordinate choices.

\begin{lemma}
    The mean \eqref{eq:md} and primal \eqref{eq:PD} derived by the regularizer $h(\cdot)$ and $\hat{h}(\cdot) = h(\cdot) + <\mathbf{c},\cdot>$ with $\mathbf{c} \in \mathbb{R}^m$ any constant vector, are the same.
\end{lemma}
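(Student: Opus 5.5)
The plan is to compare the two mirror maps directly, and then read off both the mean dynamics \eqref{eq:md} and the primal dynamics \eqref{eq:PD}.

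For the regularizer $\hat h = h + \langle \mathbf{c},\cdot\rangle$, the convex conjugate satisfies
\[
\hat h^*(\mathbf{y}) = \sup_{\mathbf{x}\in\mathcal{X}}\{\langle \mathbf{y}-\mathbf{c},\mathbf{x}\rangle - h(\mathbf{x})\} = h^*(\mathbf{y}-\mathbf{c}).
\]
It follows that $\hat Q(\mathbf{y}) = Q(\mathbf{y}-\mathbf{c})$. The maximizer is unique because $h$ is strongly convex (Definition~\ref{def:regularizer}), and adding a linear term preserves both strong convexity and the domain. So $\hat h$ is again a regularizer on $\mathcal{X}$, and the argmax in \eqref{eq:mirror map} is well defined.

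\emph{Mean dynamics.} Let $\hat{\mathbf{y}}(t)$ solve $\dot{\hat{\mathbf{y}}} = \mathbf{v}(\hat Q(\hat{\mathbf{y}}))$, and define $\mathbf{y}(t) := \hat{\mathbf{y}}(t) - \mathbf{c}$. Since $\mathbf{c}$ is constant, $\dot{\mathbf{y}} = \dot{\hat{\mathbf{y}}} = \mathbf{v}(Q(\mathbf{y}))$, so $\mathbf{y}$ solves the original mean dynamics. The primal trajectories coincide as well: $\mathbf{x}(t) = \hat Q(\hat{\mathbf{y}}(t)) = Q(\mathbf{y}(t))$. The two mean dynamics are therefore conjugate under the translation $\mathbf{y}\mapsto \mathbf{y}-\mathbf{c}$ of the dual space $\mathcal{Y}=\R^m$. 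They produce identical orbits in $\mathcal{X}$, and the shift amounts only to a relabelling of the initial dual state. This is the sense in which the two mean dynamics "are the same".

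\emph{Primal dynamics.} Because $\langle \mathbf{c},\mathbf{x}\rangle$ is linear, $\nabla^2\hat h = \nabla^2 h$ wherever $h$ is twice differentiable. Hence \eqref{eq:PD} is literally unchanged. In the constrained (projected) formulation the same conclusion holds, since the primal flow depends only on $\mathbf{x}$ and not on the dual variable.

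The only delicate point is interpretation. With a fixed dual initialization the dual trajectories differ by the constant shift $\mathbf{c}$. The statement should therefore be read as equality of the induced primal flows, or equivalently equivalence of the dual flows up to translation. The paper's assumption of a common initialization $\mathbf{x}_1$ is consistent with this reading. No other step poses difficulty.
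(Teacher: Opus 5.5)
Your argument is correct and is essentially the paper's: both rest on the fact that the dual variables for $h$ and $\hat h$ differ by the constant shift $\mathbf{c}$, so the dual velocities agree and the primal trajectories coincide. The difference is in how the shift is obtained. You get it from the conjugate, $\hat Q(\mathbf{y}) = Q(\mathbf{y}-\mathbf{c})$, while the paper uses $\nabla\hat h = \nabla h + \mathbf{c}$. Your version is slightly cleaner in the constrained Euclidean case, where $\nabla h$ does not invert the projection $Q$ onto $\mathcal{B}_\delta^m$. You also state explicitly the Hessian identity $\nabla^2\hat h = \nabla^2 h$ for \eqref{eq:PD}, which the paper leaves implicit.
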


\begin{proof}
    Let $\mathbf{y} = \nabla h(\mathbf{x})$ and $\hat{\mathbf{y}} = \nabla \hat{h}(\mathbf{x})$ denote the dual variables. Since $\hat{h}(\mathbf{x}) = h(\mathbf{x}) + \langle \mathbf{c}, \mathbf{x} \rangle$, the dual variables are related by a constant translation 
    $ \hat{\mathbf{y}} = \nabla h(\mathbf{x}) + \mathbf{c} = \mathbf{y} + \mathbf{c}.$
    Differentiating with respect to time implies that the dual velocities are identical: $\dot{\hat{\mathbf{y}}} = \dot{\mathbf{y}}$.
\end{proof}
This lemma is standard in mirror-descent theory and is included for completeness, as it ensures that the induced primal and mean dynamics are invariant under affine re-parameterizations of the regularizer.

\paragraph{Optimistic and Extra-Gradient Variants.}
The RRM framework also subsumes optimistic and extra-gradient learning algorithms. We include these variants to emphasize that our convergence results are robust across modern no-regret methods, even though these algorithms do not alter the mean dynamics driving stability.Optimistic methods construct a lookahead state $\mathbf{x}_{n+1/2} = Q(\mathbf{y}_n + \gamma_n \hat{\mathbf{v}}_{n-1})$ based on past information, while extra-gradient methods compute an interim state $\mathbf{x}_{n+1/2} = Q(\mathbf{y}_n + \gamma_n \hat{\mathbf{v}}_n)$ within the current iteration.In both cases, the resulting update can be written in the RRM form with a biased signal $\hat{\mathbf{v}}_n = \mathbf{v}(\mathbf{x}_n) + \mathbf{b}_n + U_n$, where $\mathbf{b}_n = \mathbf{v}(\mathbf{x}_{n+1/2}) - \mathbf{v}(\mathbf{x}_n).$

\citet{Mertikopoulos2023unifiedstochasticapproximationframework}
provide a unified stochastic approximation framework showing that RRM algorithms,
including optimistic and extra-gradient variants, admit a well-defined continuous-
time mean dynamics under standard regularity conditions. Although optimistic and extra-gradient variants modify the discrete-time updates through a vanishing bias term, this bias disappears in the continuous-time limit. Consequently, all stability and convergence results derived for the mean dynamics apply equally to these methods. 

In that sense, we refer to any RRM instance based on the euclidean regularizer as in Example \ref{primary example} as \textit{Euclidean RRM Algorithm}. In the Euclidean case
considered here, this mean dynamics reduces to a projected primal dynamical system.
While the convergence results in \citet{Mertikopoulos2023unifiedstochasticapproximationframework}
rely on dual energy functions, the presence of active constraints breaks the global
diffeomorphism between primal and dual variables. We therefore establish convergence
via a different route, based on Lyapunov stability of the projected primal mean
dynamics and classical results on projected stochastic approximation.

\section{Global Convergence}\label{sec:convergence}

In this section, we prove that Euclidean RRM algorithms converge to the canonical BNE in repeated play of a Bertrand competition with standard all-or-nothing demand. Our analysis focuses on the projected primal dynamics induced by the Euclidean regularizer. By standard results in the RRM framework, all Euclidean RRM algorithms, including optimistic and extra-gradient variants, induce the same continuous-time mean dynamics, with algorithmic differences appearing only as asymptotically vanishing bias terms \citep{benaim2005stochastic, Mertikopoulos2023unifiedstochasticapproximationframework}.


We now study the projected primal dynamics \eqref{eq:projected-inclusion} derived in Section~\ref{sec:meandynamics}, which
govern the continuous-time limit of Euclidean RRM algorithms under feasibility
constraints.



We now state the main convergence result of the paper.

\begin{theorem}\label{thm:Bertrand}
Considering the 2-player Bayesian Bertrand competition with all-or-nothing demand,
restricted to symmetric 2-fold piecewise linear strategies
$\mathcal{B}_\delta^2$ with $0 < \delta \le \tfrac{1}{2}$ and uniformly distributed
costs, the unique Bayesian Nash equilibrium is globally almost surely attracting
under Euclidean RRM algorithms satisfying $\ell_\gamma - \ell_\sigma > \tfrac{1}{2}$ in
\eqref{parameter conditions}.
\end{theorem}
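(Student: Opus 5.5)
The proof has two stages. First, establish that the equilibrium is globally asymptotically stable for the deterministic projected primal dynamics \eqref{eq:projected-inclusion} on $\mathcal{B}_\delta^2$. Second, lift this to the discrete-time Euclidean RRM iterates using projected stochastic approximation, in the sense of Bena\"im and of Kushner--Yin.

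\textbf{Step 1: the vector field.} For $m=2$ and $F=\mathrm{Id}$, $H(c)=c$ and $h\equiv 1$. A symmetric strategy $\beta_x$ is piecewise linear with intercept $x_0=1-\tfrac12(x_1+x_2)$, slope $x_1$ on $[0,\tfrac12]$ and slope $x_2$ on $[\tfrac12,1]$. Its inverse is again piecewise linear, with breakpoint at the price $\beta_x(\tfrac12)=x_0+x_1/2$.

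Differentiating the integrand of \cref{lemma:gateaux} with $d=\partial\beta_x/\partial x_i$ gives closed-form rational/polynomial expressions for $v_1(x)$ and $v_2(x)$. Concretely, $d=\varphi_i-\tfrac12$, and the integrand is $1-\tilde\beta^{-1}(\beta(c))-(\beta(c)-c)/\tilde\beta'(\tilde\beta^{-1}(\beta(c)))$. At the symmetric point this reduces to $1-c-(\beta_x(c)-c)/\beta_x'(c)$.

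The integral splits into the two pieces $c\in[0,\tfrac12]$ and $c\in[\tfrac12,1]$. The indicator $\chi_{\{\beta(c)>\tilde\beta(0)\}}$ is identically $1$ there, since $\beta'\ge\delta>0$. Hence $v$ is an explicit polynomial in $x_1,x_2$ divided by $x_1$ or $x_2$.

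Next, verify that $x^*=(\tfrac12,\tfrac12)$ is the unique zero of $v$ in the interior. Then check, edge by edge, that no boundary point of the triangle $\{x_i\ge\delta,\ x_1+x_2\le 2\}$ satisfies the stationarity condition $v(x)\in N_{\mathcal{B}_\delta^2}(x)$. The condition $\delta\le\tfrac12$ guarantees $x^*\in\mathcal{B}_\delta^2$.

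\textbf{Step 2: a global Lyapunov function (main obstacle).} By \cref{lemma:bertrand-no-minty}, the natural candidate $\|x-x^*\|^2$ fails: its derivative along the dynamics is $\langle v(x),x-x^*\rangle$, which is exactly the Minty quantity, and that changes sign. I would therefore look for a weighted quadratic $V(x)=(x-x^*)^\top P(x-x^*)$ with $P\succ0$, or for a quadratic form in the transformed coordinates $(x_1x_2)^{-1}$-cleared, i.e. multiplying $v$ by the positive factor $x_1x_2$.

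The derivation proceeds in three parts. First, compute the Jacobian $Dv(x^*)$ and solve the Lyapunov equation $P\,Dv+Dv^\top P=-I$ to obtain a candidate $P$. Second, check $\langle \nabla V(x),v(x)\rangle<0$ on the whole compact triangle minus $x^*$; this is a two-variable polynomial inequality after clearing the positive denominators, and it can be certified by a sum-of-squares or interval/case decomposition. Third, handle the boundary. For the projected dynamics, $\dot V=\langle\nabla V,\Pi_{T}(v)\rangle$, and since $\Pi_T(v)=v-n$ with $n\in N(x)$, one needs $\langle\nabla V(x),n\rangle\ge0$ for active normals on each edge. On the edges $x_i=\delta$ the normal is $-e_i$; on the edge $x_1+x_2=2$ it is $(1,1)$. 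The sign of $\langle\nabla V,n\rangle$ is then checked directly, or $P$ is adjusted (e.g. diagonal-dominant) so that the sublevel sets are compatible with the constraint geometry.

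If the quadratic ansatz fails globally, the fallback is a piecewise definition of $V$ that is quadratic near $x^*$ and given by a simpler function away from it. Alternatively, one can use LaSalle's invariance principle on the compact forward-invariant set with $\dot V\le0$, together with the absence of other invariant sets.

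\textbf{Step 3: stochastic approximation.} The feasible set is compact and convex and $v$ is Lipschitz on it. The iterates \eqref{eq:RRM} with Euclidean $Q=\Pi_{\mathcal{B}_\delta^2}$ are therefore a projected Robbins--Monro scheme. Under \cref{parameter conditions}, the bias satisfies $\sum\gamma_n\|\mathbf b_n\|$-type vanishing. For the noise, $\ell_\gamma-\ell_\sigma>\tfrac12$ gives $\sum_n\gamma_n^2\mathbb E\|U_n\|^2<\infty$, so the martingale $\sum\gamma_nU_n$ converges a.s.; I would use $\mathbb E\|\hat v_n\|^2$ bounds as in Mertikopoulos et al., strengthening the first-moment assumption if needed.

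Hence the interpolated process is a.s. an asymptotic pseudotrajectory of the projected flow. By Bena\"im's theorem, its limit set is internally chain transitive. A strict global Lyapunov function whose only critical point is $x^*$ forces every such set to be $\{x^*\}$. This yields almost sure convergence of $x_n$ to $x^*$, i.e. to $\beta^*(c)=\tfrac{1+c}{2}$. The optimistic and extra-gradient variants are covered because their bias $\mathbf b_n=v(x_{n+1/2})-v(x_n)=O(\gamma_n)$ vanishes.
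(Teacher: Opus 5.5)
Your architecture matches the paper's. You compute $\mathbf v$: your Gateaux-based route gives exactly the paper's formula, whose only zero with positive coordinates is $\mathbf x^*=(\tfrac12,\tfrac12)$. You then certify a weighted quadratic Lyapunov function through a decrease condition and the boundary sign conditions of Proposition~\ref{prop:asympt_stable}. Finally you conclude with the Bena\"im--Hofbauer--Sorin limit-set theorem.

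The gap is Step~2, which carries the entire weight of the theorem. You leave it as a search procedure with fallbacks rather than an actual certificate. The paper closes it by exhibiting $L(\mathbf x)=\tfrac12(\mathbf x-\mathbf x^*)^\top\mathrm{diag}(52,20)(\mathbf x-\mathbf x^*)$. Because this weight is diagonal, condition (iii) reduces to $52(\delta-\tfrac12)\le0$ on $x_1=\delta$ and $20(\delta-\tfrac12)\le0$ on $x_2=\delta$, and it holds on all of $x_1+x_2=2$. This, not merely $\mathbf x^*\in\mathcal B_\delta^2$, is where $\delta\le\tfrac12$ is used.

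Your concrete recipe does not produce such an object. With $Dv(\mathbf x^*)=\bigl(\begin{smallmatrix}-7/24&-1/8\\-1/2&-5/12\end{smallmatrix}\bigr)$, the solution of $P\,Dv+Dv^\top P=-I$ is $P=\tfrac1{289}\bigl(\begin{smallmatrix}1668&-684\\-684&552\end{smallmatrix}\bigr)$. Since $P_{21}+P_{22}<0$, for $\delta<\tfrac12$ you get $\langle\nabla V,-e_2\rangle=2(\tfrac12-\delta)(P_{21}+P_{22})<0$ at $(\delta,\delta)$. That violates the boundary condition you yourself impose. The field happens to point inward there, so this is not fatal in truth. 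But you would need either a sharper boundary argument (checking only where the projection is active) or a different $P$, e.g.\ a diagonal one from the grid LP of Appendix~\ref{app:lyapunov}.

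Once $P$ is fixed, the decrease condition is the sign of the cubic $p(\mathbf x)=x_1x_2\langle\nabla L(\mathbf x),\mathbf v(\mathbf x)\rangle$ on the triangle. Your plan to certify this independently (SOS or interval subdivision) is genuinely needed, because the decomposition $p=\sum_i g_i\sigma_i$ printed in the paper's proof of (ii) is not an identity. At $\mathbf x=(\tfrac32,\tfrac12)$ one has $p=-\tfrac{91}{24}$, whereas $\sum_i g_i\sigma_i\le-\tfrac{15463}{2052}<-7$ for every admissible $\delta$. Still, $p<0$ away from $\mathbf x^*$ does appear to hold; on the diagonal, for instance, $p(s,s)=-5s(2s-1)^2$.

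Your Step~3 also inherits from the paper the identification of \eqref{eq:RRM} with $Q=\Pi_{\mathcal B_\delta^2}$ as a projected Robbins--Monro scheme in $\mathbf x$. That is correct for the eager update $\mathbf x_{n+1}=\Pi_{\mathcal B_\delta^2}(\mathbf x_n+\gamma_n\hat{\mathbf v}_n)$ used in the paper's experiments. In the lazy recursion, however, $\mathbf y_n$ keeps accumulating outward components of $\mathbf v$, for instance on $x_1=\delta$ when $x_2>(5-7\delta)/3$. Its mean flow $\dot{\mathbf y}=\mathbf v(\Pi_{\mathcal B_\delta^2}(\mathbf y))$ is therefore not the projected primal flow, and $L$ does not apply to it without further argument.
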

The proof proceeds in three steps. First, we compute the game gradient
$v(\mathbf{x})$, which defines the projected primal dynamics
(Proposition~\ref{prop:game gradient}). Second, we construct a global Lyapunov
function for these projected dynamics
(Proposition~\ref{prop:Lyapunov}), establishing global asymptotic stability of the
unique equilibrium. Third, we invoke standard results from projected stochastic
approximation (\cite{benaim2005stochastic}) to lift this stability result to almost sure convergence of the
discrete-time RRM iterates.

\begin{proposition}\label{prop:game gradient}
In the Bayesian Bertrand competition with two players and uniformly distributed
costs, the partial derivatives of the expected utility given symmetric strategies in $\mathcal{B}_\delta^m$ for any $\delta >0$ with respect to the parameters $x_i$ are given by
\[
\begin{split}
v_i(\mathbf{x})
&= \sum_{j=1}^{i-1} \frac{1}{m} \int_{\frac{j-1}{m}}^{\frac{j}{m}}
\left( \frac{\Pi(p_j,c)}{x_j}
      - (1-c)\partial_p \Pi(p_j,c) \right) \, dc \\
      &+ \int_{\frac{i-1}{m}}^{\frac{i}{m}}\left(\frac{i}{m} - c\right)
\left( \frac{\Pi(p_i,c)}{x_i}
      - (1-c)\partial_p \Pi(p_i,c) \right) \, dc ,
\end{split}
\]
where $p_j = p_j(\mathbf{x},c)= x_j (c - \tfrac{j-1}{m}) + 1 - \tfrac{1}{m} x_{\geq j}$ with $x_{\geq j} := \sum_{k=j}^{m} x_k$ denotes the induced price function and
$\Pi(p,c)$ is the profit kernel.
\end{proposition}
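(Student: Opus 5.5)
The plan is to apply Lemma~\ref{lemma:gateaux} with $n=2$ and $F=\mathrm{Id}$, choosing as direction $d$ the partial derivative of the parametrized price function with respect to $x_i$. The only remaining work is to evaluate the resulting integral piece by piece on the grid $[\frac{j-1}{m},\frac{j}{m}]$. Throughout I read the profit kernel as the ex-post margin $\Pi(p,c)=p-c$, so that $\partial_p\Pi\equiv 1$. I keep the notation $\Pi$ only to match the statement.

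\emph{Step 1 (explicit strategy and direction).} I will write $\beta_{\mathbf{x}}(c)=x_0+\sum_k x_k\varphi_k(c)$ with $x_0=1-\frac1m\sum_k x_k$. On the $j$-th piece, $\varphi_k=\frac1m$ for $k<j$, $\varphi_j=c-\frac{j-1}{m}$, and $\varphi_k=0$ for $k>j$. This gives
\[
\beta_{\mathbf{x}}(c)=1-\tfrac1m x_{\ge j}+x_j\bigl(c-\tfrac{j-1}{m}\bigr)=p_j(\mathbf{x},c),
\]
which is the price function in the statement. Differentiating in $x_i$, and remembering that $x_0$ depends on $\mathbf{x}$, gives $d_i:=\partial_{x_i}\beta_{\mathbf{x}}=\varphi_i-\frac1m$. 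Hence $d_i=-\frac1m$ on pieces $j<i$, $d_i=c-\frac im$ on piece $i$, and $d_i=0$ on pieces $j>i$. Since $v_i$ is the derivative in the player's own parameter with the opponent frozen at $\mathbf{x}'$ and then evaluated at $\mathbf{x}'=\mathbf{x}$, and since $\mathbf{x}\mapsto\beta_{\mathbf{x}}$ is affine, the chain rule gives $v_i(\mathbf{x})=DU(\beta_{\mathbf{x}},\beta_{\mathbf{x}})[d_i]$.

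\emph{Step 2 (specialize the Gateaux formula).} With $n=2$ and uniform costs, $H=F=\mathrm{Id}$ and $h\equiv1$. At the symmetric point $\tilde\beta=\beta$ we have $\tilde\beta^{-1}(\beta(c))=c$. Strict monotonicity, from $x_k\ge\delta>0$, makes the indicator $\chi_{\{\beta(c)>\beta(0)\}}$ equal to $1$ for all $c>0$, hence a.e. On piece $j$ we also have $\beta'(c)=x_j$. The integrand of \eqref{eq:derivative} therefore reduces to
\[
d_i(c)\Bigl(1-c-\tfrac{p_j-c}{x_j}\Bigr)=-d_i(c)\Bigl(\tfrac{\Pi(p_j,c)}{x_j}-(1-c)\partial_p\Pi(p_j,c)\Bigr).
\]

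\emph{Step 3 (assemble).} I will split $\int_0^1$ over the $m$ pieces, drop the pieces $j>i$ where $d_i=0$, and insert $-d_i=\frac1m$ on pieces $j<i$ and $-d_i=\frac im-c$ on piece $i$. This yields exactly the two sums in the proposition.

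The only delicate point I expect is justifying the use of Lemma~\ref{lemma:gateaux} in this parametrized setting. One issue is that $d_i$ need not itself be an admissible perturbation direction, which is harmless because the lemma allows any $d\in V$. The other is the sign bookkeeping coming from the implicit dependence of $x_0$ on $\mathbf{x}$, which is what produces the $-\frac1m$ contributions on earlier pieces. Everything else is routine integration.
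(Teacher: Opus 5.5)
Your argument is correct. It is exactly the shortcut the paper names in the first line of its proof, $v_i(\mathbf{x})=DU(\beta(\mathbf{x}),\beta(\mathbf{x}))[\varphi_i-\frac1m]$, but then sets aside in favour of a direct computation.

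\textbf{The paper's route.} The written proof differentiates the ex-ante utility $u(\mathbf{x},\mathbf{x}')$ by hand. It tracks the opponent type $c_0$ at which the perturbed and unperturbed prices tie. The winning probability then becomes $1-c+\frac{\varepsilon}{m x_j}$ on pieces $j<i$ and $1-c+\frac{\varepsilon}{x_i}(\frac im-c)$ on piece $i$. It then linearizes $\Pi(p_j(\mathbf{x}+\varepsilon\mathbf{e}_i,c),c)$ in $\varepsilon$ and applies the product rule.

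\textbf{What your route buys.} It is shorter and better justified. The interchange of limit and integral is already handled by the dominated-convergence argument inside Lemma~\ref{lemma:gateaux}. So are the $\mathcal{O}(\varepsilon)$-measure sets of $c$ near the left end of a piece, where the tie point slips into the neighbouring piece or below $\beta(0)$. By contrast, the paper's formula for $1-c_0$ holds only away from those sets, and the resulting error is not discussed.

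\textbf{What the paper's route buys.} Generality: it never uses $\Pi(p,c)=p-c$, which is why the paper can claim the formula for any profit kernel. Your proof is tied to that specific margin, because Lemma~\ref{lemma:gateaux} is derived only for it. As written, you prove the proposition for $\Pi(p,c)=p-c$ alone, which is the case the paper actually uses for $m=2$.

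\textbf{Extending to a general kernel.} To cover a general kernel that is $C^1$ in $p$ with bounded derivative, redo the lemma with $\Pi(\beta(c)+\varepsilon d(c),c)$ in place of $\beta(c)+\varepsilon d(c)-c$. At the symmetric uniform point this gives the integrand $d(c)\bigl[(1-c)\,\partial_p\Pi(\beta(c),c)-\Pi(\beta(c),c)/\beta'(c)\bigr]$. After that, your Steps 2 and 3 go through verbatim.
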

This Proposition allows to determine explicitly the game gradient for any profit kernel and $m\in \mathbb{N}$. In case of the all-or-nothing demand $\Pi(p,c)=p-c$ and $m=2$, this yields
\[
\mathbf{v}(\mathbf{x}) =
\begin{pmatrix}
-\dfrac{7}{48} - \dfrac{3x_2}{48x_1} + \dfrac{5}{48x_1} \\[1.2em]
-\dfrac{1}{3} - \dfrac{x_2}{8x_1} + \dfrac{3}{16x_1} + \dfrac{1}{24x_2}
\end{pmatrix}.
\]
Let us recall a relevant proposition from the literature which properly defines a Lyapunov function in the case of constrained dynamics.
\begin{proposition}[\citep{souaibyComputationLyapunovFunctions2020}] \label{prop:asympt_stable}
    Consider the system as in \eqref{eq:projected-dynamics}. Assume that there exists a continuously differentiable function $L$ that satisfies the following conditions:
    \begin{itemize}
        \item[(i)] \textit{Bounds.} $L(\mathbf{x}^*)=0$, and $\underline{\alpha}\left(\norm{\mathbf{x} - \mathbf{x}^*} \right) \leq L(\mathbf{x}) \leq  \overline{\alpha} \left(\norm{\mathbf{x} - \mathbf{x}^*} \right)$ for every $\mathbf{x} \in \mathcal{B}_\delta^m$, and some $\underline{\alpha}, \overline{\alpha} \in \mathcal{K}$.
        \item[(ii)] \textit{Decrease along trajectories.} $\langle \nabla L(\mathbf{x}), v(\mathbf{x}) \rangle \leq - w \norm{\mathbf{x} - \mathbf{x}^*}^2$ for some $w > 0$.
        \item[(iii)] \textit{Boundary condition.} If $\mathbf{x}$ is such that $g_i(\mathbf{x})=0$, for some $i \in \{0, \dots, m\}$, then $\langle \nabla L(\mathbf{x}), \nabla g_i(\mathbf{x}) \rangle \leq 0$.
    \end{itemize}
    Then $L$ is a Lyapunov function for $\mathbf{x}^*$.
\end{proposition}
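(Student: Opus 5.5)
The plan is to show directly that $L$ decreases strictly along every solution of the projected system \eqref{eq:projected-dynamics}. Combined with the sandwich bounds in (i), this gives Lyapunov stability and global attraction of $\mathbf{x}^*$. The key step is to express the projected vector field through the active constraints, so that condition (iii) controls the correction coming from the projection.

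\textbf{Step 1 (well-posedness).} Since $\mathcal{B}_\delta^m$ is a compact convex polyhedron defined by the affine constraints $g_i\ge 0$, and $\mathbf{v}$ is Lipschitz on it, classical results on projected dynamical systems (Cornet; Nagurney--Zhang) apply. The components of $\mathbf{v}$ are rational in $\mathbf{x}$, with denominators bounded away from zero because $x_k\ge\delta$. These results give, for every initial point, a unique absolutely continuous solution $\mathbf{x}(t)\in\mathcal{B}_\delta^m$ for all $t\ge 0$. This solution satisfies $\dot{\mathbf{x}}(t)=\Pi_{T\mathcal{B}_\delta^m(\mathbf{x}(t))}(\mathbf{v}(\mathbf{x}(t)))$ for almost every $t$.

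\textbf{Step 2 (normal cone decomposition).} Let $I(\mathbf{x})=\{i: g_i(\mathbf{x})=0\}$ be the set of active constraints. The constraints are affine, so the normal cone is $N_{\mathcal{B}_\delta^m}(\mathbf{x})=\{-\sum_{i\in I(\mathbf{x})}\lambda_i\nabla g_i(\mathbf{x}) : \lambda_i\ge 0\}$, and the tangent cone is its polar. By Moreau's decomposition, $\Pi_{T}(\mathbf{v})=\mathbf{v}-\Pi_{N}(\mathbf{v})$. Hence there exist $\lambda_i\ge 0$ with
\[
\dot{\mathbf{x}} = \mathbf{v}(\mathbf{x}) + \sum_{i\in I(\mathbf{x})}\lambda_i \nabla g_i(\mathbf{x}).
\]
Since $L$ is $C^1$ and $\mathbf{x}(\cdot)$ is absolutely continuous, $t\mapsto L(\mathbf{x}(t))$ is absolutely continuous with derivative $\langle\nabla L,\dot{\mathbf{x}}\rangle$ almost everywhere. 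Therefore, almost everywhere,
\[
\tfrac{d}{dt}L(\mathbf{x}(t)) = \langle \nabla L(\mathbf{x}),\mathbf{v}(\mathbf{x})\rangle + \sum_{i\in I(\mathbf{x})}\lambda_i\langle\nabla L(\mathbf{x}),\nabla g_i(\mathbf{x})\rangle \le -w\norm{\mathbf{x}-\mathbf{x}^*}^2 .
\]
The inequality uses (ii) for the first term and (iii) together with $\lambda_i\ge0$ for the sum.

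\textbf{Step 3 (stability and convergence).} Condition (i) gives $\norm{\mathbf{x}-\mathbf{x}^*}\ge\overline{\alpha}^{-1}(L(\mathbf{x}))$. Substituting this into Step 2 yields the scalar differential inequality $\dot L\le -w\,(\overline{\alpha}^{-1}(L))^2$, whose right-hand side is negative definite. A standard comparison lemma then gives $L(\mathbf{x}(t))\to 0$ and $L(\mathbf{x}(t))\le L(\mathbf{x}(0))$. Using the lower bound $\underline{\alpha}$, this yields $\norm{\mathbf{x}(t)-\mathbf{x}^*}\le\underline{\alpha}^{-1}(\overline{\alpha}(\norm{\mathbf{x}(0)-\mathbf{x}^*}))$, which is Lyapunov stability, and $\mathbf{x}(t)\to\mathbf{x}^*$, which is global attraction. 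So $L$ is a Lyapunov function certifying global asymptotic stability.

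I expect the main obstacle to be Step 2. There one has to justify the almost-everywhere chain rule along a merely absolutely continuous trajectory, and verify that the projection onto the tangent cone really equals $\mathbf{v}$ plus a nonnegative combination of the active constraint gradients. The latter relies on the polyhedral structure, which makes the normal cone finitely generated without any constraint qualification issues. The remaining steps are routine comparison arguments.
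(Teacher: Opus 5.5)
Your proof is correct. The paper gives no proof of this proposition; it is quoted from \citet{souaibyComputationLyapunovFunctions2020}. The only comparison available is therefore with the cited source, and your argument is essentially the standard one behind it. Because $\mathcal{B}_\delta^m$ is polyhedral, the normal cone at $\mathbf{x}$ is generated by the $-\nabla g_i(\mathbf{x})$ with $i$ active, with no constraint qualification needed. The projection therefore only adds nonnegative multiples of the inward normals $\nabla g_i$, and (iii) prevents these from increasing $L$. This leaves $\tfrac{d}{dt}L(\mathbf{x}(t))\le -w\norm{\mathbf{x}(t)-\mathbf{x}^*}^2$ almost everywhere.

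What your write-up adds over the bare citation is a check of hypotheses the paper leaves implicit:
\begin{itemize}
\item compactness of $\mathcal{B}_\delta^m$;
\item Lipschitz continuity of $\mathbf{v}$, which genuinely uses $x_k\ge\delta>0$ since $\mathbf{v}$ contains $1/x_k$ terms;
\item existence of absolutely continuous solutions despite the discontinuous right-hand side;
\item the a.e.\ chain rule.
\end{itemize}

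Two small remarks. First, Step 2 never uses that $\dot{\mathbf{x}}$ is the particular velocity selected by $\Pi_{T\mathcal{B}_\delta^m(\mathbf{x})}$. It only uses $\dot{\mathbf{x}}=\mathbf{v}(\mathbf{x})+\sum_{i\in I(\mathbf{x})}\lambda_i\nabla g_i(\mathbf{x})$ with $\lambda_i\ge 0$. So the same decrease holds verbatim for every absolutely continuous solution of the inclusion \eqref{eq:projected-inclusion}, which is the form the paper actually relies on in its stochastic-approximation step. Second, Step 3 needs no comparison lemma. $L(\mathbf{x}(t))$ is non-increasing and bounded below. If its limit $L_\infty$ were positive, then $\norm{\mathbf{x}(t)-\mathbf{x}^*}\ge\overline{\alpha}^{-1}(L_\infty)>0$ for all $t$, which would force $L(\mathbf{x}(t))\to-\infty$. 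This elementary argument avoids any regularity requirement on $\overline{\alpha}^{-1}$.
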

Remark, a continuous function $\alpha: [0, a) \rightarrow [0, \infty)$ belongs to class $\mathcal{K}$ if it is strictly increasing and satisfies $\alpha(0)=0$.
\begin{proposition}\label{prop:Lyapunov}
    For the projected primal dynamics \eqref{eq:projected-dynamics}
with $0<\delta\le \tfrac12$, a Lyapunov function for $x^\ast$ is given by
    \begin{equation*}
        L(\mathbf{x}) := 
        \tfrac{1}{2} \,
        \begin{pmatrix}
             \mathbf{x} - \mathbf{x^*}
        \end{pmatrix}^{\!T}
        \begin{pmatrix}
            52 & 0 \\
            0 & 20 \\
        \end{pmatrix}
        \begin{pmatrix}
            \mathbf{x} - \mathbf{x^*}
        \end{pmatrix}.
    \end{equation*}
\end{proposition}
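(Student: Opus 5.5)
The plan is to verify the three hypotheses of Proposition~\ref{prop:asympt_stable} for the diagonal quadratic form $L$. The first step is to identify $\mathbf{x}^*$. The BNE $\beta^*(c)=\tfrac{1+c}{2}$ of Corollary~\ref{cor:symmetric BNE AON} has slope $\tfrac12$ on both pieces, so $\mathbf{x}^*=(\tfrac12,\tfrac12)$. Substituting into the closed form of $\mathbf{v}$ from Proposition~\ref{prop:game gradient} gives $v_1(\mathbf{x}^*)=\tfrac{-7-3+10}{48}=0$ and $v_2(\mathbf{x}^*)=\tfrac{-8-3+9+2}{24}=0$, so $\mathbf{x}^*$ is an interior rest point. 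Since $\delta\le\tfrac12$, it lies in $\mathcal{B}_\delta^2$.

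Conditions (i) and (iii) are immediate. For (i), $10\norm{\mathbf{x}-\mathbf{x}^*}^2\le L(\mathbf{x})\le 26\norm{\mathbf{x}-\mathbf{x}^*}^2$, so we may take class-$\mathcal{K}$ bounds $\underline\alpha(r)=10r^2$ and $\overline\alpha(r)=26r^2$. For (iii), write $a=x_1-\tfrac12$ and $b=x_2-\tfrac12$.
\begin{itemize}
\item On $g_i=0$ for $i\in\{1,2\}$ we have $\nabla g_i=e_i$. Then $\partial_i L$ equals $52(\delta-\tfrac12)$ or $20(\delta-\tfrac12)$, both $\le 0$ because $\delta\le\tfrac12$.
\item On $g_0=0$, i.e.\ $a+b=1$, we have $\nabla g_0=-(1,1)$. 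The condition becomes $52a+20b=32a+20\ge 0$, which holds since $a\ge\delta-\tfrac12>-\tfrac58$.
\end{itemize}

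The core is the decrease condition (ii). I will put $\mathbf{v}$ over common denominators and expand in $(a,b)$:
\[
v_1=\frac{-7a-3b}{48x_1},\qquad v_2=\frac{-16ab-6a-5b-6b^2}{48x_1x_2}.
\]
Multiplying $\langle\nabla L,\mathbf{v}\rangle=52a\,v_1+20b\,v_2$ by $48x_1x_2>0$ and using $x_2=b+\tfrac12$ gives
\[
N(a,b):=-182a^2-198ab-100b^2-364a^2b-476ab^2-120b^3 .
\]
Since $x_1x_2\le 1$ on the feasible set, it suffices to show $N(a,b)\le -w'(a^2+b^2)$ on the feasible region. I will regroup $N$ as a quadratic form in $(a,b)$ with $b$-dependent coefficients:
\[
N=-(182+364b)a^2-(198+476b)ab-(100+120b)b^2 .
\]
The quadratic part at the origin is negative definite, since $198^2<4\cdot182\cdot100$. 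The diagonal coefficients stay negative because $b>-\tfrac12$, equivalently $x_2>0$.

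The main obstacle is the discriminant condition $(198+476b)^2<4(182+364b)(100+120b)$. It holds on most of the range but fails slightly near the extreme $b\approx\tfrac32$. There, however, the constraint $x_1+x_2\le 2$ forces $a\le 1-b$, close to $-\tfrac12$, so $a$ is pinned down. I will therefore split the feasible polygon into two parts.
\begin{itemize}
\item On the part where the discriminant condition holds (say $b\le b_0$ for an explicit $b_0$ slightly below $\tfrac32$), use the pointwise negative definiteness. Its smallest eigenvalue is bounded away from zero on this compact $b$-range, which yields a uniform $w'$.
\item On the complementary thin strip, $a\in[\delta-\tfrac12,\,1-b]$ and $b\in[b_0,\tfrac32-\delta]$. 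There I will bound $N$ directly: the large negative terms $-120b^3-100b^2$ dominate, and $a^2+b^2$ is bounded above and below, so a constant $w'$ works.
\end{itemize}
Combining the two regions gives (ii) with $w=w'/48$. Proposition~\ref{prop:asympt_stable} then yields that $L$ is a Lyapunov function for $\mathbf{x}^*$.
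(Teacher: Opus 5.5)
\textbf{There is a genuine gap in (ii).} Your identification of $\mathbf{x}^*=(\tfrac12,\tfrac12)$, the bounds in (i), the boundary checks in (iii), and the expansion of $N$ are all correct. In fact $N/48$ is exactly the numerator in the paper's proof. The problem is that the discriminant condition fails \emph{not only} near $b=\tfrac32$. Rewriting your coefficients in $x_2=b+\tfrac12$ gives $A=364x_2$, $B=476x_2-40$, $C=120x_2+40$, and
\[
4AC-B^2=-16\,(3241x_2^2-6020x_2+100).
\]
This is negative for $x_2>1.8407$ (your upper strip). It is \emph{also} negative for $x_2<0.0168$, i.e.\ $b<-0.4832$. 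At $x_2=0$ the form is indefinite, since $A=0$ and $B=-40$.

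For $\delta<0.0168$ this lower band lies in $\mathcal{B}_\delta^2$, and neither of your two regions covers it. There the pointwise negative-definiteness step is simply false. This band is also the delicate part of the domain. On $x_2=0$ one has $N=-20x_1$, and at $\mathbf{x}=(\delta,\delta)$ one has $N=-240\delta+960\delta^2-960\delta^3$, while $\norm{\mathbf{x}-\mathbf{x}^*}^2\approx\tfrac12$. So once you discard $1/(x_1x_2)$, any admissible $w'$ must shrink with $\delta$. A $\delta$-uniform definiteness argument cannot work there.

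\textbf{How to repair it.} Add a third region. In $x$-coordinates,
\[
N=-20x_1-220x_2+642x_1x_2+318x_2^2-364x_1^2x_2-476x_1x_2^2-120x_2^3 .
\]
On $\{x_2\le\tfrac1{50}\}$, dropping the last three (nonpositive) terms gives
\[
N\le -x_1(20-642x_2)-x_2(220-318x_2)\le -7(x_1+x_2)\le -14\delta .
\]
Since $\norm{\mathbf{x}-\mathbf{x}^*}^2\le\tfrac52$ on $\mathcal{B}_\delta^2$, this gives $w'=\tfrac{28}{5}\delta$ there. Your definiteness argument then only has to run on $\tfrac1{50}\le x_2\le 1.84$, where $4AC-B^2>0$.

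Your upper-strip heuristic is also inaccurate. At $(a,b)=(-\tfrac12,1.34)$ the positive terms $-198ab-476ab^2\approx 560$ exceed $120b^3+100b^2\approx 468$. It is easily fixed, though. For $b\ge 1.34$, $N$ is concave in $a$ and decreasing from $a=-\tfrac12$, so
\[
N\le N(-\tfrac12,b)=-\tfrac{91}{2}+8b+138b^2-120b^3\le -75 .
\]

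\textbf{Comparison with the paper.} The paper avoids case analysis altogether with a certificate $N/48=\sum_i g_i\sigma_i$ in which each $\sigma_i$ is negative definite. Negativity then follows at once from $g_i\ge 0$. Note, however, that $N$ does not depend on $\delta$ while $g_1,g_2$ do. Matching coefficients, the identity is exact only when $g_1,g_2$ are replaced by $x_1-\tfrac1{20}$ and $x_2-\tfrac1{20}$, with $\sigma_1,\sigma_2$ interchanged. So that certificate directly covers only $\delta\ge\tfrac1{20}$. For smaller $\delta$, the boundary strips need an estimate of exactly the kind your proposal is missing.
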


The Lyapunov candidate was obtained by solving a linear program with the entries of
$H$ as decision variables and subsequently verifying that the Lyapunov conditions hold globally for the projected dynamics. In particular, $L$ is positive definite on
$\mathcal{B}_\delta^2$ and strictly decreasing along all trajectories of the
projected primal dynamics except at $\mathbf{x}^*$. A complete proof can be found in the Appendix~\ref{app:lyapunov}.

\begin{figure}
\centering
\includegraphics[width=0.5\linewidth]{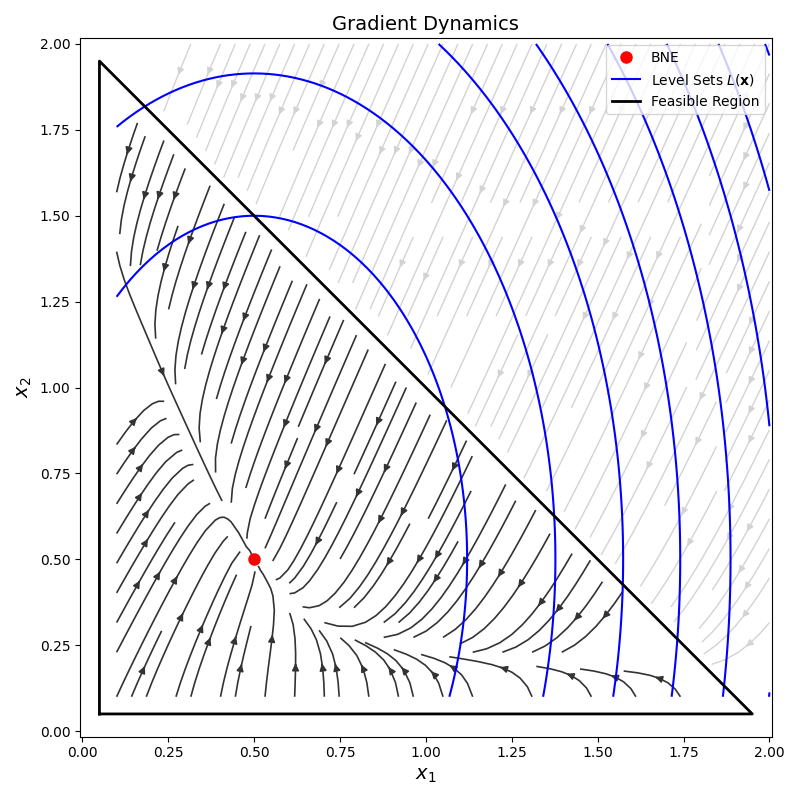}
\caption{Projected gradient dynamics with $m=2$ and $\delta=1/10$.}
\label{fig:gradient dynamics}
\end{figure}

In contrast to unconstrained settings, the projection operator
$\Pi_{T\mathcal{B}_\delta^m(\mathbf{x})}$ is not globally smooth due to the presence
of active constraints, and hence does not induce a global diffeomorphism between
primal and dual variables. Consequently, global convergence cannot be established
via a dual energy-function argument. Instead,
Proposition~\ref{prop:Lyapunov} directly establishes global asymptotic stability of
$\mathbf{x}^*$ for the projected primal dynamics.

Note, the Euclidean RRM algorithm can be written as a projected stochastic approximation in the sense of
\citet{Mertikopoulos2023unifiedstochasticapproximationframework}.

\begin{remark}
    To guarantee that the affine interpolated RRM process is a perturbed solution of the projected differential
    inclusion \citep[Proposition~1.3]{benaim2005stochastic}, we must constrain the exponents $\ell_\gamma, \ell_\sigma,$ and $\ell_b$. 
    In order to control the martingale noise $\boldsymbol{U}_n$ scaling as $\mathcal{O}(n^{\ell_\sigma})$, we adapt the sufficient conditions of \citep[Proposition~1.4]{benaim2005stochastic}. Requiring the noise terms to be square-summable  $\sum \gamma_n^2 \mathbb{E}[\|\boldsymbol{U}_n\|^2] < \infty$) implies the constraint $\sum n^{-2\ell_\gamma} n^{2\ell_\sigma} < \infty$. This yields the strict lower bound $\ell_\gamma - \ell_\sigma > 1/2$, ensuring that the step size decays sufficiently fast to average out the growing variance.
\end{remark}

By the limit set theorem \citep[Theorem~3.6]{benaim2005stochastic} (cf. \cite{Benaim1996DynamicalSystemtoStochasticApproximation}, \cite{BenaimHirsch1996APTs}), the limit set of the process is almost surely internally chain transitive. Proposition~\ref{prop:Lyapunov} provides a Lyapunov function for the differential inclusion with a unique zero at $x^*$. Since every internally chain transitive set is contained in the zero set of the Lyapunov function \citep[Proposition~3.27]{benaim2005stochastic}, we conclude that the limit set is contained in $\{x^*\}$. This completes the proof of Theorem \ref{thm:Bertrand}.

Figure~\ref{fig:gradient dynamics} illustrates that trajectories are rapidly driven
into regions where the vector field points inward, after which the dynamics behave
smoothly. The Lyapunov argument, however, establishes global convergence without
relying on this geometric intuition.

\section{Numerical Results}

In order to extend the theoretical results empirically to strategies with more than two linear pieces, we analyse the RRM instance of stochastic gradient ascent plus a systematic offset and demonstrate the convergence for up to 16 pieces (see Figure \ref{fig:sga}). 
\begin{remark}
    The following experiment illustrates the convergence behavior of a single parameter instance covered by Theorem \ref{thm:Bertrand}. This is intended as an empirical validation only.
    We consider a stochastic gradient ascent scheme defined by \eqref{eq:RRM}, where 
    \begin{equation}
        \begin{split}
            \mathbf{x}_{n+1} &= \Pi_{B^m_\delta}\left(\mathbf{x}_n + \frac{1}{n^{\ell_\gamma}} \mathbf{\hat v}_n \right), \\
            \mathbf{\hat{v}}_n &= \mathbf{v}(\mathbf{y}_n) + n^{\ell_\sigma}\mathbf{\theta}_n + \frac{1}{n^{\ell_b}}\mathbf{b},
        \end{split}
    \end{equation}
    with $\delta = 1/10$, $\theta_n \sim \mathcal{U}(B_1)$ and $\mathbf{b}\sim \mathcal{U}(B_1)$ a fixed randomly chosen unit vector. $B_1 \subset \R^m$ denotes the unit ball. The parameters in \eqref{parameter conditions} are given by $\ell_\gamma = 0.05, \ell_\sigma = -1, \ell_b = 1$. The code can be shared upon request.
\end{remark}
\begin{figure}[htb]
   \centering
   \includegraphics[width=0.5\linewidth]{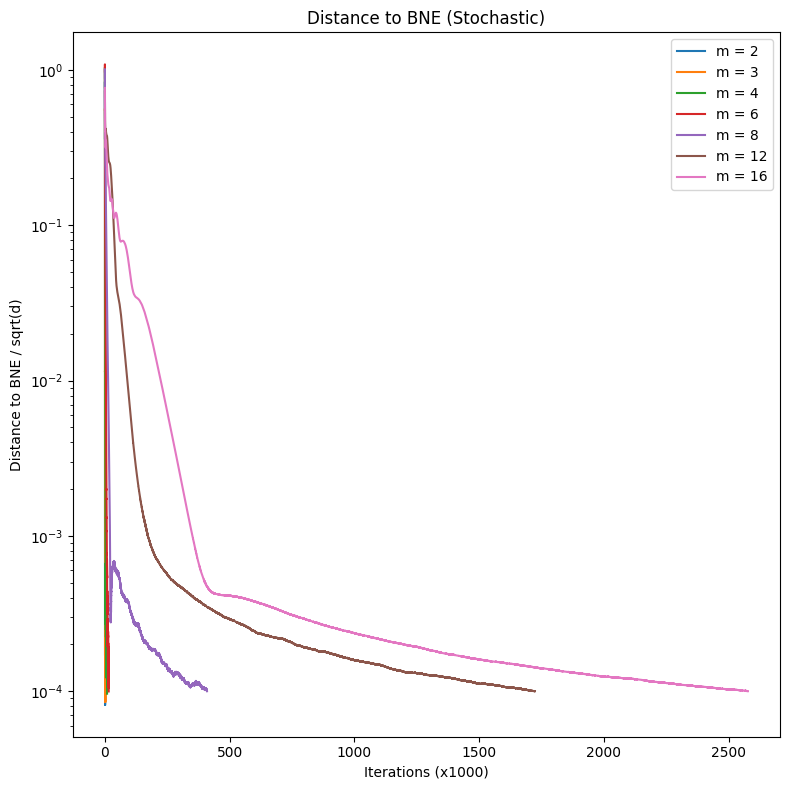}
   \caption{Stochastic Gradient Ascent plus offset for $m=2, 3, 4, 6, 8, 12, 16$ pieces.}\label{fig:sga}
\end{figure}

\section{Conclusions}
We studied learning dynamics induced by Regularized Robbins–Monro algorithms in Bayesian Bertrand competition with private costs. Although this setting violates standard sufficient conditions for convergence, most notably the Minty variational inequality, we showed that learning nevertheless converges to the unique, efficient Bayes–Nash equilibrium. Focusing on Euclidean RRM algorithms, we analyzed the associated projected primal dynamics and explicitly constructed a global Lyapunov function certifying global asymptotic stability of the competitive equilibrium. Using results from projected stochastic approximation, we lifted this stability result to discrete time and established almost sure convergence for a broad class of Euclidean RRM algorithms.
These findings provide a theoretical counterpoint to recent empirical observations of algorithmic collusion in pricing environments. 

Our analysis focuses on symmetric piecewise-linear pricing strategies, which allow for an explicit and transparent Lyapunov construction. Nonetheless, the underlying approach, combining stability analysis of projected mean dynamics with stochastic approximation theory, extends beyond the specific setting studied here and can be useful for other games as well. 

Natural directions for future research include extensions to asymmetric cost distributions, characterizations of discrete-time convergence rates, and the analysis of alternative learning geometries, such as entropic mirror ascent, where dual-space techniques may become applicable.
Our explicit Lyapunov construction is currently specialized to the projected mean dynamics for $m=2$ in the symmetric duopoly with uniform costs.
Extending the construction to larger $m$ (e.g., via SOS/SDP-based certificates under polyhedral state constraints) is another promising direction.
On the empirical side, a systematic study of convergence rates and robustness across step-size/noise regimes would further clarify the practical behavior of stochastic first-order methods in Bayesian Bertrand environments.

\newpage

\bibliographystyle{unsrtnat}
\bibliography{bibliography}

\newpage
\onecolumn

\section*{Appendix}
\section{Proofs}
\subsection{Proof of Lemma \ref{lemma:gateaux}}\label{app:proof of gateaux}

\begin{proof}
By definition, we have
\begin{align*}
    DU(\beta, \boldsymbol{\tilde{\beta}}_{-1})[d] 
    &= \lim_{\varepsilon \to 0} \varepsilon^{-1} \big(U_1(\beta + \varepsilon d, \boldsymbol{\tilde{\beta}}_{-1}) - U_1(\beta, \boldsymbol{\tilde{\beta}}_{-1})\big) \\
    &= \lim_{\varepsilon \to 0} \varepsilon^{-1}  \Bigg(\int_0^1 (\beta(c) + \varepsilon d(c) - c)\int_0^1 \chi_{\{\beta(c) + \varepsilon d(c) < \tilde{\beta}(z)\}} \, dH(z) dF(c) \\
    &\qquad- \int_0^1(\beta(c)-c)\int_0^1\chi_{\{\beta(c) < \tilde{\beta}(z)\}}dH(z)dF(c) \Bigg) \\
    &= \lim_{\varepsilon \to 0} \int_0^1 \Bigg((\beta(c)-c)\underbrace{\frac{1}{\varepsilon}\int_0^1 \chi_{\{\beta(c) + \varepsilon d(c) < \tilde{\beta}(z)\}} - \chi_{\{\beta(c) < \tilde{\beta}(z)\}}dH(z)}_{I_1(c):=} \\
    &\qquad + d(c)\underbrace{\int_0^1 \chi_{\{\beta(c) + \varepsilon d(c) < \tilde{\beta}(z)\}} dH(z)}_{I_2(c):=} \Bigg)dF(c)
\end{align*}

Note, that
\begin{align*}
    | \chi_{\{\beta(c) + \varepsilon d(c) < \tilde{\beta}(z)\}} 
    - \chi_{\{\beta(c) < \tilde{\beta}(z)\}} |
    &= \chi_{\{\min(\beta(c), \beta(c) + \varepsilon d(c)) < \tilde{\beta}(z) < \max(\beta(c), \beta(c) + \varepsilon d(c))\}} \\
    &\leq \chi_{\{\beta(c) - |\varepsilon|\|d\|_{L^\infty} \leq \tilde{\beta}(z) \leq \beta(c) + |\varepsilon|\|d\|_{L^\infty}\}}.
\end{align*}
Hence,
\begin{align*}
    |I_1(c)| 
    &\leq |\varepsilon^{-1}| \int_0^1 \chi_{\{\beta(c) - |\varepsilon|\|d\|_{L^\infty} \leq \tilde{\beta}(z) \leq \beta(c) + |\varepsilon|\|d\|_{L^\infty}\}} \, dH(z) \\
    &= |\varepsilon^{-1}| \int_{\tilde{\beta}(0)}^{\tilde{\beta}(1)} \chi_{\{\beta(c) - |\varepsilon|\|d\|_{L^\infty} \leq b \leq \beta(c) + |\varepsilon|\|d\|_{L^\infty}\}} \, d(H \circ \tilde{\beta}^{-1})(b) \\
    &\leq 2 \|d\|_{L^\infty} L_{H \circ \tilde{\beta}^{-1}} \leq 2 \delta^{-1} \|d\|_{L^\infty} \|h\|_{L^\infty},
\end{align*}
where we used that $d \in V \subset AC([0,1])$ is uniformly bounded.

By Lipschitz continuity of $H \circ \tilde{\beta}^{-1}$, we also obtain the $F$-a.e. pointwise limit:
\begin{align*}
    I_1(c) &= \frac{1}{\varepsilon}\int_0^1 \chi_{\{\beta(c) + \varepsilon d(c) < \tilde{\beta}(z)\}} - \chi_{\{\beta(c) < \tilde{\beta}(z)\}}dH(z) \\
    &= \frac{1}{\varepsilon}\int_{\tilde{\beta}(0)}^{\tilde{\beta}(1)}  \chi_{\{\beta(c) + \varepsilon d(c) < p\}} - \chi_{\{\beta(c) < p\}} d(H \circ \tilde{\beta}^{-1})(p) \\
    &= \frac{-1}{\varepsilon}\int_{\tilde{\beta}(0)}^{\tilde{\beta}(1)} \chi_{\{\beta(c)<p<\beta(c) + \varepsilon d(c)\}} d(H \circ \tilde{\beta}^{-1})(p)  \qquad F\text{-a.e.}\\
    &= - \chi_{\{\beta(c) > \tilde{\beta}(0)\}}  \chi_{\{1>\beta(c)+\varepsilon d(c)\}} \frac{1}{\varepsilon}\int_{\beta(c)}^{\beta(c) + \varepsilon d(c)} d(H \circ \tilde{\beta}^{-1})(p)  \qquad F\text{-a.e.}\\
    &\to - \chi_{\{\beta(c) > \tilde{\beta}(0)\}} d(c) \frac{h(\tilde{\beta}^{-1}(\beta(c)))}{\tilde{\beta}'(\tilde{\beta}^{-1}(\beta(c)))} \qquad F\text{-a.e.}
\end{align*}
Observe that the minus sign comes from the fact that increasing the lower bound of the indicator set reduces the measure of that set. \\

For $I_2(c)$ let us calculate
\begin{align*}
    I_2(c) &= \int_0^1 \chi_{\{\beta(c) + \varepsilon d(c) < \tilde{\beta}(z)\}} dH(z) \\
    &= \int_{\tilde{\beta}(0)}^{\tilde{\beta}(1)} \chi_{\{\beta(c)+\varepsilon d(c) < p\}} d(H \circ \tilde{\beta}^{-1})(p) \\
    &= \chi_{\{1>\beta(c)+ \varepsilon d(c) > \tilde{\beta}(0)\}} \int_{\beta(c) + \varepsilon d(c) }^1 d(H \circ \tilde{\beta}^{-1})(p)\\
    &= \chi_{\{1>\beta(c)+ \varepsilon d(c) > \tilde{\beta}(0)\}} \left( (H \circ \tilde{\beta}^{-1})(1) - (H \circ \tilde{\beta}^{-1})\left(\beta(c) + \varepsilon d(c) \right)\right) \\
    &\to \chi_{\{\beta(c) > \tilde{\beta}(0)\}} \left(1- (H \circ \tilde\beta^{-1} \circ \beta)(c)\right)
\end{align*}
Note that $I_2$ is just the absolute size of the set above $\beta(c)$, and the $1$ is the total measure. \\

By dominated convergence, we can interchange limit and integration and derive  the desired expression.
\begin{align*}
    DU(\beta, \boldsymbol{\tilde{\beta}}_{-1})[d]
    &= \int_0^1 \left( (\beta(c) - c)\lim_{\varepsilon \to 0} I_1(c) + d(c) \lim_{\varepsilon \to 0}I_2(c) \right)\, dF(c) \\
    &= \int_0^1 d(c)\,\chi_{\{\beta(c) > \tilde{\beta}(0)\}} \, \left(-(\beta(c) - c) \frac{h(\tilde{\beta}^{-1}(\beta(c)))}{\tilde{\beta}'(\tilde{\beta}^{-1}(\beta(c)))} +1- H(\tilde{\beta}^{-1}(\beta(c)))\right) f(c) \, dc.
\end{align*}

The expression is linear in $d$ and bounded, so that $DU_1(\beta, \tilde{\beta}) \in V^*$.  
\end{proof}

\subsection{Proof of Lemma \ref{lemma:bertrand-no-minty}}\label{app:proof of bertrand no minty}

\begin{proof}
Let us write down \eqref{eq:derivative} in case of two uniformly i.i.d. players
\begin{equation}\label{eq:derivative uniform}
\begin{split}
    DU(\beta, \beta)[d] &= \int_0^1 d(c) \, \left(1-\frac{\beta(c) - c}{\beta'(c)} - (1- (1-c))\right) dc  \\
        &= \int_0^1 d(c) \, \left(1-c -\frac{\beta(c) - c}{\beta'(c)}\right) dc 
\end{split}
\end{equation}
Insert \eqref{eq:derivative uniform} into \eqref{ineq:symMVI} to get
\begin{equation}\label{ineq:minty uniform}
    \int_0^1 \left(\beta(c)- \beta^*(c) \right) \, \left(1-c -\frac{\beta(c) - c}{\beta'(c)}\right) dc \leq 0 \;\; \forall \beta \in \mathcal{B}_\delta.
\end{equation}
A counter example is given by
\[
\beta(c) = 
\begin{cases} 
    \frac{1}{2} + \frac{c}{5} & \text{for } c < \frac{1}{k+2}, \\[1em]
    \frac{5k+4}{10(k+2)} + \frac{4}{5}c & \text{for } \frac{1}{k+2} \leq c < \frac{2}{k+2}, \\[1em]
    \frac{1+c}{2} & \text{for } c \geq \frac{2}{k+2}.
\end{cases}
\]


for any $k \in \mathbb{N}_0$, as it can be calculated that the left side of \eqref{ineq:minty uniform} is indeed strictly positive.
\end{proof}

\subsection{Proof of Proposition \ref{prop:game gradient}}

\begin{proof}
We could use the closed form for the Gateaux derivative in the following way:

$$v_i(\mathbf{x})= DU(\beta(\mathbf{x}),\beta(\mathbf{x}))[\varphi_i -\frac{1}{m}], \quad \text{for} \quad i \in \mathcal{N}$$ 

Alternatively, we show directly, which does not require the Gateaux Derivative, as it serves as a more applicable approach for future cases. Recall \(p_j(\mathbf{x},c) = x_j (c - \tfrac{j-1}{m}) + 1 - \tfrac{1}{m} x_{\geq j}\) with $x_{\geq j} := \sum_{k=j}^{m} x_k$. \\

The ex-ante utility is given by
    \[
        u(\mathbf{x},\mathbf{x'}) = \sum_{j=1}^m \int_\frac{j-1}{m}^\frac{j}{m} \Pi(p_j(\mathbf{x},c), c) \left(\sum_{k=1}^m \int_\frac{k-1}{m}^\frac{k}{m}\chi_{\{ p_j(\mathbf{x},c)< \; p_k(\mathbf{x'},z)\}} \, dz \right) dc
    \]
    We use the definition of the partial derivative as a differential quotient.
    \begin{align*}
         \frac{\partial u
         (\mathbf{x}, \mathbf{x'})}{\partial x_i} \bigg \vert_{\mathbf{x'} = \mathbf{x}} &= \lim_{\varepsilon \to 0}\frac{1}{\varepsilon} \left(  u_1(\mathbf{x}+\varepsilon \mathbf{e_i},\mathbf{x})-u_1(\mathbf{x},\mathbf{x})\right)\\
         &= \lim_{\varepsilon \to 0}\frac{1}{\varepsilon} \sum_{j=1}^{i} \int_\frac{j-1}{m}^\frac{j}{m}\Bigg(\Pi(p_j(\mathbf{x}+\varepsilon\mathbf{e_i},c), c) (1-x) - \Pi(p_j(\mathbf{x},c), c) (1-c)\Bigg)dc
    \end{align*}
    where \(c_0\) satisfies \(\beta_\mathbf{x}(c_0) =\beta_{\mathbf{x}+\varepsilon\mathbf{e_i}}(c)\). From the previous derivation, we know the behavior of $c_0$ and the winning probability term $(1-c_0)$:
    \begin{align*}
        1- c_0 = \begin{cases}
            1-c + \frac{\varepsilon}{mx_j} & \text{for} \; c \in (\frac{j-1}{m},\frac{j}{m}] \; \text{with} \; j < i \\
            1-c + \frac{\varepsilon}{x_i}(\frac i m -c) & \text{for} \; c \in (\frac{i-1}{m},\frac{i}{m}] 
        \end{cases}
    \end{align*}
    We linearize the profit kernel term $\Pi(p_j(\mathbf{x}+\varepsilon\mathbf{e_i},c), c)$ around $\varepsilon = 0$:
    \begin{align*}
         \Pi(p_j(\mathbf{x}+\varepsilon\mathbf{e_i},c), c) = \Pi(p_j(\mathbf{x},c), c) + \varepsilon \cdot \partial_p \Pi(p_j(\mathbf{x},c), c) \cdot \frac{\partial p_j}{\partial \varepsilon} + \mathcal{O}(\varepsilon^2)
    \end{align*}
    The price perturbation $\frac{\partial p_j}{\partial \varepsilon}$ is given by:
    \begin{align*}
        \frac{\partial p_j}{\partial \varepsilon} = \begin{cases}
            - \frac{1}{m} & \text{for} \; j < i \\
            - (\frac{i}{m}-c) & \text{for} \; j=i
        \end{cases}
    \end{align*}
    Now we apply the product rule to the integrand limit. Let $p_j = p_j(\mathbf{x},c)$. The term inside the integral becomes:
    \[
       \lim_{\varepsilon \to 0} \frac{1}{\varepsilon} \left[ \left(\Pi(p_j,c) + \varepsilon \partial_p \Pi \frac{\partial p_j}{\partial \varepsilon}\right) \left((1-c) + \varepsilon \frac{\partial (1-x)}{\partial \varepsilon}\right) - \Pi(p_j,c)(1-c) \right]
    \]
    ignoring $\mathcal{O}(\varepsilon^2)$ terms, this simplifies to:
    \[
        \Pi(p_j,c) \frac{\partial (1-x)}{\partial \varepsilon} + (1-c) \partial_p \Pi(p_j,c) \frac{\partial p_j}{\partial \varepsilon}
    \]
    
    We split the summation into two cases:
    \begin{align*}
     \begin{cases}
        \Pi(p_j,c) \left(\frac{1}{mx_j}\right) + (1-c) \partial_p \Pi(p_j,c) \left(-\frac{1}{m}\right) = \frac{1}{m} \left( \frac{\Pi(p_j,c)}{x_j} - (1-c)\partial_p \Pi(p_j,c) \right) \quad \text{for} \;  j < i \\
    \Pi(p_i,c) \left(\frac{1}{x_i}(\frac{i}{m} - c)\right) + (1-c) \partial_p \Pi(p_i,c) \left(-(\frac{i}{m}-c)\right) = \left(\frac{i}{m} - c\right) \left( \frac{\Pi(p_i,c)}{x_i} - (1-c)\partial_p \Pi(p_i,c) \right) \quad \text{for} \;  j=i
    \end{cases}
    \end{align*}
    Substituting these back into the summation yields the result.
\end{proof}

\section{Construction of a Lyapunov Function}\label{app:lyapunov}
Before we show that the function $L$ from Proposition \ref{prop:Lyapunov} is indeed a Lyapunov function for the system, we want to briefly describe how one can find a suitable candidate.

Assume that there exists a Lyapunov function of the form $L(\mathbf{x}) = \tfrac 1 2 (\mathbf{x}-\mathbf{x}^*)^T H (\mathbf{x} -\mathbf{x}^*)$ with a matrix $H \in \mathbb{Z}^{m \times m}$. 
To find a suitable candidate, we solve an LP with the parameters of $H$ as variables and the conditions (ii) and (iii) of Proposition \ref{prop:asympt_stable} in  at finitely many points as constraints.
To that end, we discretize the feasible set $\mathcal B^m_\delta$ by laying a uniform grid over the space and selecting all points that satisfy the constraints. 
In addition to these interior points, we also explicitly include additional points located on the boundary of the feasible set to ensure that stability conditions are appropriately captured near the edges. 
The set of these discrete points is denoted by $B$.
This discretization allows us to impose the conditions as linear constraints in a finite-dimensional linear program for finding a suitable matrix $H$:
\begin{align*}
    \max \gamma \qquad  & \\
    \text{s.t.} \qquad \langle H(\mathbf{x} - \mathbf{x}^*), \mathbf{v}(\mathbf{x}) \rangle &\leq - w \Vert \mathbf{x} - \mathbf{x}^* \Vert_2^2 - \gamma \quad &\forall \mathbf{x} \in B \\
    \forall i \in \{1, \dots, m+1\} \qquad \langle H(\mathbf{x} - \mathbf{x}^*), \nabla g_i(\mathbf{x}) &\leq 0 \quad &\forall \mathbf{x} \in B: g_i(\mathbf{x}) = 0 \\
    \gamma \in \mathbb{R}, \, H &\in \mathbb{Z}^{m \times m} .
\end{align*}
The parameter $w>0$ is some strictly positive constant.
If we find a solution such that $\gamma = 0$, we have a candidate that satisfies the conditions at the discrete points. 

\subsection{Proof of Proposition \ref{prop:Lyapunov}.}\label{sec:proof of prop Lyapunov}
To show that $\mathbf{x}^*$ is globally asymptotically stable and that $L$ is a Lyapunov function for the primal dynamics \ref{eq:projected-dynamics} with $m=2$, we have to check the conditions of Proposition \ref{prop:asympt_stable}:
\begin{itemize}
    \item[(i)] We can define the functions $\underline \alpha $ and $\overline{\alpha}$ with $ \underline \alpha (\Vert \mathbf{x} - \mathbf{x}^* \Vert) := 10 \Vert \mathbf{x} - \mathbf{x}^* \Vert^2$ and $\overline{\alpha}  (\Vert \mathbf{x} - \mathbf{x}^* \Vert) := 26  \Vert \mathbf{x} - \mathbf{x}^* \Vert^2 $. The inequality obviously holds and the squared distance is of class $\mathcal K $.
    \item[(ii)] We first rewrite the inner product in the following way. 
\begin{align*}
          \langle \nabla L(\mathbf{x}), \mathbf{v}(\mathbf{x}) \rangle  
      &= (52x_1-26) \cdot \left(\frac{-3 x_2 - 7 x_1 + 5}{48 x_1} \right) + (20x_2-10)\cdot\left(-\frac{x_2}{8 x_1} + \frac{3}{16 x_1} + \frac{1}{24 x_2} - \frac 1 3 \right) \\
      &= \frac{1}{x_1 x_2} \left( - \frac{5 x_{2}^{3}}{2} - \frac{119 x_{2}^{2} x_{1}}{12} + \frac{53 x_{2}^{2}}{8} - \frac{91 x_{2} x_{1}^{2}}{12} + \frac{107 x_{2} x_{1}}{8} - \frac{55 x_{2}}{12} - \frac{5 x_{1}}{12} \right)\\ 
      &= \frac{1}{x_1 x_2} \left( g_0(\mathbf{x}) \cdot \sigma_0(\mathbf{x}) + g_1(\mathbf{x}) \cdot \sigma_1(\mathbf{x}) + g_2(\mathbf{x}) \cdot \sigma_2(\mathbf{x})  \right)
\end{align*}
    where $g_i$ are the constraints of the feasible set as defined in Equation \eqref{eq:constrained set}, and the $\sigma_i$ are polynomials.
    The polynomials are given by $\sigma_i(\mathbf{x}) = \tfrac{1}{2} (\mathbf{x} - \mathbf{x}^*)^T \Sigma_i  (\mathbf{x} - \mathbf{x}^*)$ for $i = 0,1,2$ with
    \begin{equation*}
        \Sigma_0 = 
        \begin{pmatrix}
            -\frac{21}{38}  & 0 \\ 
            0 &-\frac{1}{2}
        \end{pmatrix} \quad 
        \Sigma_1 = 
        \begin{pmatrix}
            - \frac{15463}{1026} & - \frac{955}{108} \\ 
            -\frac{955}{108} & - \frac{11}{2}
        \end{pmatrix}, \quad 
        \Sigma_2 = 
        \begin{pmatrix}
            - \frac{21}{38} & - \frac{35}{108} \\ 
            -\frac{35}{108} &- \frac{143}{54}
        \end{pmatrix}.
    \end{equation*}
    It is easy to verify that these quadratic polynomials are strictly concave with unique maximizers at $\mathbf{x} = \mathbf{x}^*$ with $\sigma_i(x^*) = 0$ for $i = 0,1,2$. 
     
    Let $\lambda^* < 0$ be the largest eigenvalue of $\Sigma_j$ with $j\in \{0,1,2 \}$ (note that all eigenvalues are negative). Then we have $\sigma_i(\mathbf{x}) \leq \frac 1 2 \lambda^* \Vert x - x^* \Vert_2^2 \leq 0$ for all $i = 0,1,2$, and we can write
\begin{align*}
  \langle \nabla L(\mathbf{x}), \mathbf{v}( \mathbf{x}) \rangle  
  &\leq \frac{\lambda^*}{ 2 x_1 x_2}  \cdot  \Vert \mathbf{x} - \mathbf{x}^* \Vert^2_2  \cdot \left(g_0(\mathbf{x}) +g_1(\mathbf{x}) + g_2(\mathbf{x})\right) \\
  &\leq \dfrac{\lambda^*}{8} \cdot \Vert \mathbf{x} - \mathbf{x}^* \Vert^2_2 \cdot  (2 - 2\delta)
  = - w \Vert \mathbf{x} - \mathbf{x}^* \Vert^2_2,
\end{align*}
 where we set $w:= - \tfrac{\lambda^*}{8} (2 - 2\delta)$. Note that $\lambda^* = -\tfrac{10553}{1026} + 5\tfrac{\sqrt{17026937}}{2052} \leq -0.231 $, which guarantees $w> 0$ for all $0 < \delta \leq \frac{1}{2}$.     
\item[(iii)] We show case by case:
\begin{align*}    
    &\text{If } g_0(\mathbf{x}) = 0 \text{, then } x_2 = 2 - x_1 \text{ and } \langle \nabla L(x_1, 2-x_1), \nabla g_0(x_1,2-x_1) \rangle = 32 x_1 - 68 \leq 0, \, \forall x_1 \leq 2, \\
    &\text{If } g_1(\mathbf{x}) = 0 \text{, then } x_1 = \delta \text{ and } \langle \nabla L(\delta, x_2), \nabla g_1(\delta, x_2) \rangle = 20(\delta-\tfrac{1}{2})\leq 0. \\
    &\text{If } g_2(\mathbf{x}) = 0 \text{, then } x_2 = \delta \text{ and } \langle \nabla L(x_1, \delta), \nabla g_2(x_1, \delta) \rangle = 52(\delta-\tfrac{1}{2}) \leq 0. 
\end{align*}
\end{itemize}
Therefore, the conditions of Proposition \ref{prop:asympt_stable} are satisfied and the function $L$ is a Lyapunov function for the primal dynamics \eqref{eq:projected-dynamics} with $m=2$ and $0<\delta\leq \tfrac{1}{2}$.
\end{document}